\documentclass{article}
\usepackage[utf8]{inputenc}

\title{A Hierarchy of Network Models Giving Bistability Under Triadic Closure}

\author{Stefano Di Giovacchino%
    \thanks{%
            Department of Engineering and Computer Science and Mathematics, University of L’Aquila, Italy.
            The work of SDG was 
            supported by the GNCS-INDAM project and by the PRIN2017-MIUR project 2017JYCLSF ``Structure preserving approximation of evolutionary problems.''
           }
\and 
        Desmond John Higham%
            \thanks{%
           School of Mathematics,
           University of Edinburgh,
           Edinburgh, EH9 3FD, UK,
           The work of DJH was supported by
the Engineering and Physical Sciences Research Council 
under grants EP/P020720/1 and EP/V015605/1.
        }
        \and
        Konstantinos C. Zygalakis%
        \thanks{%
           School of Mathematics,
           University of Edinburgh,
           Edinburgh, EH9 3FD, UK.
           The work of KCZ was supported by  the Leverhulme Trust (RF/ 2020-310) and by  the Engineering and Physical Sciences Research Council  under grant EP/V006177/1. 
        }
}

\usepackage{amsmath}
\usepackage{amsmath,amssymb,bbold,verbatim,graphics}
\usepackage{amsmath,amssymb}
\usepackage{amsthm}
\usepackage{subfigure}
\usepackage{multirow}
\usepackage{color}
\newtheorem{theorem}{Theorem}[section]
\usepackage{caption}

\newcommand{\RR}{\mathbb{R}}
\newcommand{\PP}{\mathbb{P}}

\newcommand{\bnu}{\mbox{\boldmath$\nu$}}

\definecolor{raspberry}{rgb}{0.89, 0.04, 0.36}
\definecolor{lincolngreen}{rgb}{0.11, 0.35, 0.02}
\definecolor{mediumtealblue}{rgb}{0.0, 0.33, 0.71}
\definecolor{oceanboatblue}{rgb}{0.0, 0.47, 0.75}
\definecolor{otterbrown}{rgb}{0.4, 0.26, 0.13}
\definecolor{riflegreen}{rgb}{0.25, 0.28, 0.2}
\definecolor{rosewood}{rgb}{0.4, 0.0, 0.04}
\definecolor{uared}{rgb}{0.85, 0.0, 0.3}
\definecolor{vividburgundy}{rgb}{0.62, 0.11, 0.21}
\definecolor{vividauburn}{rgb}{0.58, 0.15, 0.14}

\def\wc3{\widehat{c}_3}

\begin{document}

\maketitle

\begin{abstract}
      Triadic closure describes the tendency for new friendships to form between individuals who already have friends in common.
  It has been argued heuristically that the triadic closure effect 
  can lead to 
bistability in the formation of large-scale social interaction networks.
Here, depending on the initial state and 
the transient dynamics, the system may evolve towards either of two 
long-time states.
 In this work, we propose and study a hierarchy of network evolution models 
 that incorporate triadic closure, building on the work
 of Grindrod, Higham and Parsons 
 [Internet Mathematics, 8, 2012, 402--423].
 We use a chemical kinetics framework, paying careful attention to the reaction rate scaling with respect to the system size. 
In a macroscale regime, we show rigorously that 
 a bimodal steady state distribution is admitted. 
 This behavior corresponds to the existence of two distinct
 stable fixed points in a deterministic mean-field ODE.
 The macroscale model is also seen to capture 
 an apparent metastability property of the microscale system.
 Computational simulations are used to support the analysis. 

\end{abstract}

\section{Motivation and Background}\label{sec:mot}

Network science is built on
the study of pairwise interactions between 
individual elements in a system.
However, it is becoming more widely 
recognized that higher-order motifs
involving 
groups of elements
are also highly relevant
\cite{BRSJK18,benson2016higher,torres2020why}.
In this work we focus on the formation of triangles---connected  
triples of nodes---over time.

Naturally-occurring networks 
are often observed to have an over-abundance of triangles
\cite{KD17,K14}.
In the case of social networks, where
nodes represent people and interactions 
describe friendships, 
\emph{triadic closure}
is widely regarded as a key 
triangle-forming mechanism.
The concept of triadic closure dates back to the work 
of Simmel \cite{Simm08} and gained attention 
after the widely cited article 
\cite{Gran1973}.
Suppose two people, B and C, are not currently friends
but have a friend, A, in common.
The triadic closure principle states that 
the existence of the edges A-B and A-C increases the
likelihood that the edge B-C will form at some future time.
In other words, having a friend in common increases the chance of two people becoming friends.
As discussed in \cite[Chapter~2]{EK2010}, there are three 
reasons why the chance of the B-C connection increases.
First, B and C both socialize with A and hence have 
a greater 
\emph{opportunity} to meet.
Second, A can vouch for B and C and hence 
raise the level of \emph{trust} between them. 
Third, A may be \emph{incentivized} 
to encourage the B-C friendship
if maintaining a single triadic friendship is viewed as more efficient than maintaining a separate pair of dyadic friendships with B and C separately.
Building on these ideas it is also natural, as in 
\cite{cliques20,GHP12}, to argue that 
the increased likelihood that B and C will become 
connected via triadic closure grows in proportion to the number of friends they have in common.

With the advent of large-scale time-stamped data sets 
that record online human interactions, it 
has become possible to test for the presence of
triadic closure 
\cite{BRSJK18,KD17,MH20}.
Moreover, the triadic closure principle can be used
as the basis of link prediction algorithms, which 
attempt to 
anticipate the appearance of new edges over time 
\cite{BRSJK18,EA15,L13,YBU20}.

A time-dependent random graph model that incorporates triadic closure was proposed and analysed in \cite{GHP12}, and also calibrated to real cellphone data.
The model, of the general form defined in
\cite{GH12}, 
takes the form of a discrete-time Markov chain where
edges may appear or disappear independently at random 
over each time step.
Numerical simulations revealed bistable 
dynamics---different paths of the same stochastic process 
were observed to evolve towards either a sparse regime or a rich, well-triangulated regime. 
A heuristic, deterministic mean-field approximation was
put forward to explain the behavior.

In this work, our aim is to develop and study a hierarchy of 
time-dependent network models that incorporate triadic closure and admit bistable behavior.
We use a chemical kinetics framework to describe the system at a microscale level and then consider macroscale approximations.
This allows us to clarify the assumptions that go into 
the micro-to-macro step and to be clear about the scaling of the 
model parameters with respect to system size.
At the macroscale level, the resulting 
process can be shown rigorously to 
admit bistable behavior, in the sense of possessing a bimodal 
steady state distribution for a particular choice of model parameters.
The macroscale steady state also give insights into the 
apparent metastability behavior observed in the microscale
simulations for large system size; where the switching time between sparse and well-triangulated regimes 
becomes extremely long.
We study this effect by 
computing the 
mean time to transition  
between regimes.
We also introduce the 
corresponding diffusion, or Langevin, model and 
compute its 
mean transition times.
Here, reaction rate theory \cite{Pavliotisbook}  
gives a good approximation to the observed growth
in mean passage time with respect to system size.
Finally, we show that the underlying deterministic
reaction rate ODE, arising in the thermodynamic limit,
gives a continuous-time analogue of the 
mean field equation in \cite{GHP12}.

The rest of the manuscript is organized as follows. In section~\ref{sec:micro} we set up the notation and
define the microscale model.
We also report on computational experiments that
motivate the subsequent material.
In section~\ref{sec:macro} we derive a macroscopic approximation
and analyze its steady states and mean exit times.
Section~\ref{sec:sde} introduces the corresponding
Langevin and ODE models.
We finish with a brief discussion in Section~\ref{sec:disc}.

\section{Microscopic Model}\label{sec:micro}

We suppose now that 
a network has a fixed set of $n$ nodes, which
at each point in time may be 
connected by undirected edges.
For example, the 
nodes may represent individuals in an online social media 
platform and the edges may represent mutual friendships.
To be concrete, we assume that the nodes are labelled 
from $1$ to $n$, and we let $N = n(n-1)/2$ denote the total number of edges
that may be present at any time.
We are interested in the case where $n$ is large, say 
$n \ge 30$, and we find it useful to think of 
a \emph{one-parameter family} of models, parameterized by $N$.

Our modelling framework allows edges to be created or deleted in continuous time.
We will use a chemical kinetics setting where 
the creation or deletion of an edge is represented as a reaction
between ``edge'' and ``no edge'' molecules.
To be precise, consider 
any pair of distinct nodes, $i$ and $j$.
Without loss of generality, we take $i < j$.
Then, for the corresponding 
undirected edge between 
nodes $i$ and $j$ we associate two species, 
$E_{ij}$ and $O_{ij}$.
 Existence of $E_{ij}$ represents the presence of the edge  
connecting nodes $i$ and $j$ and existence of
$O_{ij}$ represents the absence of this edge.
So exactly one of
 $E_{ij}$ and $O_{ij}$ exists at any given time 
 $t$.

The mechanisms governing the edge evolution will be modelled as 
the following set of reactions, with rate constants 
$c_1$,
$c_2$ and
$\wc3$:
\begin{eqnarray}
O_{ij} &\stackrel{c_1}{\rightarrow}& E_{ij},    \label{eq:OE} \\
E_{ij}  &\stackrel{c_2}{\rightarrow}& O_{ij},   \label{eq:EO} \\
O_{ij}  + E_{jk} + E_{ik}  &\stackrel{\wc3}{\rightarrow}&  
E_{ij} + 
E_{jk}+E_{ik} \label{eq:EEO} ,
\qquad \text{where~}  i,j,k \text{~are distinct}.
\end{eqnarray}
Here, reactions (\ref{eq:OE})
and 
(\ref{eq:EO})
 represent spontaneous  edge birth and spontaneous edge death,
 respectively. 
 The third reaction, (\ref{eq:EEO}), captures the triadic closure effect:
 if $i$ and $j$ are not currently connected, 
then  
  for every instance where there is a node $k$ connected to both
  $i$ and $j$, there is chance  
  that $i$ and $j$ will become connected via 
  triadic closure.
  Hence, we use the same
  principle as \cite{GHP12}:
  the overall chance of an edge arising from triadic closure is 
  linearly proportional 
  to the number of new triangles that this event would create.
 
 We refer to
(\ref{eq:OE})--(\ref{eq:EEO}) as the \emph{microscopic model}. 
For the purpose of our work, 
this microscopic model is regarded as an exact description of the 
network evolution. Our aim is to study this model, and in 
later sections we 
will introduce approximations that
allow us to gain insights. 
 
 In principle, we could apply the stochastic simulation algorithm (SSA), also known as 
 Gillespie's algorithm, to 
 this system
 \cite{G76,G77}. 
The state vector, which records the number of $E_{ij}$ and $O_{ij}$ ``molecules'' at each time point
 will have $2N$ components.
 There are $N$ reactions of type   (\ref{eq:OE}),
 $N$ reactions of type   (\ref{eq:EO})
and 
$N(N-1)$   reactions of type   (\ref{eq:EEO}).
Hence, the stochiometric vectors would have dimension 
$N(N+1)$. 
Given an initial state vector, 
at each step of the SSA we draw two  
 random numbers: one to determine
time of the next reaction and one to determine 
  which reaction takes place.

Because exactly one of 
$E_{ij}$ and $O_{ij}$ can exist, the propensity functions for 
reactions
(\ref{eq:OE})--(\ref{eq:EEO})  have simple forms:
\begin{itemize}
 \item for reaction (\ref{eq:OE}), the propensity function is $c_1$ if $O_{ij}$ exists and 
 zero otherwise,
  \item for reaction (\ref{eq:EO}), the propensity function is $c_2$ if $E_{ij}$ exists and 
 zero otherwise,
 \item for reaction (\ref{eq:EEO}), the propensity function is $\wc3$ if $O_{ij}$, $E_{jk}$ and $E_{ik}$  exist and 
 zero otherwise.
 \end{itemize}
 
We note that 
reactions 
(\ref{eq:OE}) and (\ref{eq:EO})
involve individual edges, and it is reasonable to 
assume that the rate constants 
$c_1$ and $c_2$ do not depend on the system size.
Reaction (\ref{eq:EEO}), however, involves interactions between edges, and 
the question of how $\wc3$ depends on the system size
could be viewed as 
 a modelling issue.
 We will argue in section~\ref{sec:macro} that 
 $\wc3$ should scale like $1/n$, since this 
 produces 
 systems where  
the triadic closure mechanism  
remains present but does not overwhelm the spontaneous birth and death 
 of edges.
 More precisely, we will let
 \begin{equation}
 \wc3 = \frac{c_3}{n-2},
 \label{eq:c3scale}
 \end{equation}
 for a fixed constant $c_3$ (independent of $n$).


  From the SSA perspective, 
  we can reduce the size 
  of the stoichiometric vectors, and  
   thereby  
   save considerably on 
  storage and bookkeeping, by 
  exploiting the special structure of this system.
  To do this, we    
   draw three 
 random numbers at each step: one to determine
time of the next reaction, one to determine 
  which \emph{class} of reaction takes place and one to 
   pick a reaction from within this class. 
   In this way,  
  the computation can be performed directly with the symmetric adjacency matrix,
  $A(t) \in \RR^{N \times N}$, where $(A(t))_{ij} = 1$ if the edge 
  connecting $i$ and $j$ is present at time $t$ and
   $(A(t))_{ij} = 0$ otherwise.
    (It would be sufficient to work with just the upper triangle of 
     $A(t)$, but we find it more natural to 
     use the whole matrix when describing the algorithm.)
     
   To summarize this approach, we introduce the class-level propensities
   \begin{eqnarray}
         a_{\text{birth}} &=& c_1 \sum_{i<j} \sum_{j} (1 -A(t)_{ij}),   \label{eq:abirth}  \\
    a_{\text{death}} &=& c_2 \sum_{i<j} \sum_{j} A(t)_{ij},    \label{eq:adeath} \\
       a_{\text{triadic}} &= & \wc3 \sum_{i<j} \sum_{j} 
       A_{\text{triadic}}(t)_{ij}, 
         \label{eq:atriadic} 
          \end{eqnarray}
          where 
         $ A_{\text{triadic}}(t) \in \RR^{N \times N}$ is defined as
       \begin{equation}
A_{\text{triadic}}(t)_{ij} =   
(  A^2(t)  )_{ij} (1 -A(t)_{ij}). 
\label{eq:Atriad}
      \end{equation}
In 
 (\ref{eq:abirth}) we form the product of the rate constant $c_1$ and the total number of 
 missing edges.
  Similarly, in  
(\ref{eq:adeath}) we form the product of the rate constant $c_2$ and the total number of 
 current edges. 
 To understand (\ref{eq:atriadic}), note that 
  the term $(  A^2(t)  )_{ij}$ counts the number of nodes that are connected to both 
  $i$ and $j$. 
  Hence, in (\ref{eq:atriadic}) we form the 
product of the rate constant $\wc3$ and the total number of opportunities for 
triadic closure.
  
 A step of SSA may then be performed as follows:
  \begin{description}
   \item[1.] Compute 
       $a_{\text{birth}}$,
      $a_{\text{death}}$,
      $a_{\text{triadic}}$
      and form 
      $a_{\text{sum}} = 
      a_{\text{death}} + 
      a_{\text{birth}} + a_{\text{triadic}}$.
        \item[2.] Choose the time until the next reaction from an exponential distribution with parameter
          $a_{\text{sum}}$.
           \item[3.]    Choose the class of reaction: death, birth or triadic, with probability proportional to 
           $a_{\text{death}}$,
      $a_{\text{birth}}$ and
      $a_{\text{triadic}}$, respectively.
      \item[4a.] If the class is death, choose one of the nonzeros from the upper triangle of $A(t)$ uniformly at random, and set it to zero.
          \item[4b.] If the class is birth, choose one of the zeros from the upper triangle of $A(t)$ uniformly at random, and set it to one.
            \item[4c.] If the class is triadic, choose one of the nonzeros from the upper triangle of the matrix 
             $A_{\text{triadic}}(t)_{ij}$ with probability proportional to its value, and set it to one.
        \end{description}

\begin{figure}[h]
  \centering
		\subfigure{\scalebox{0.24}{\includegraphics{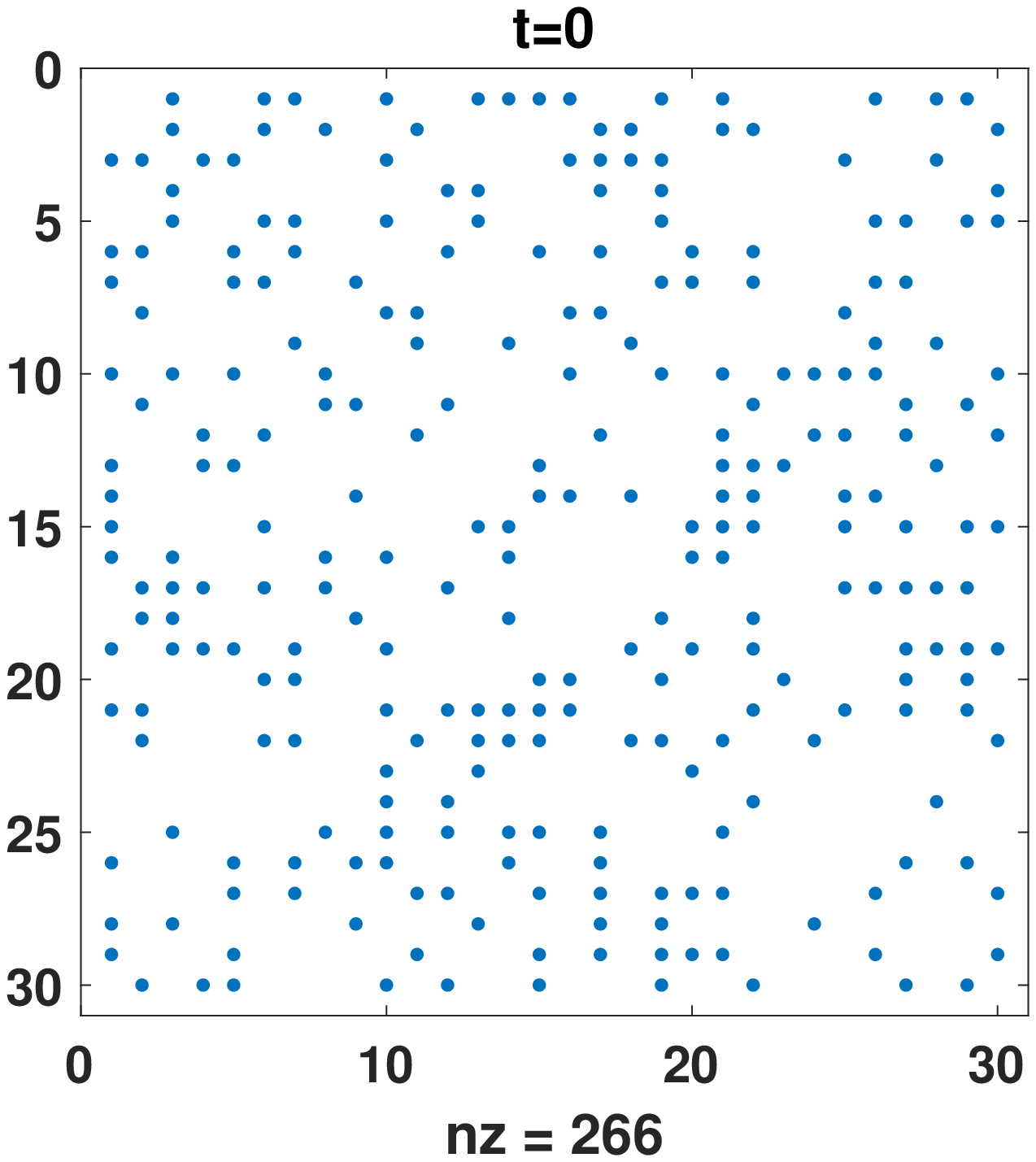}}}\quad
		\subfigure{\scalebox{0.24}{\includegraphics{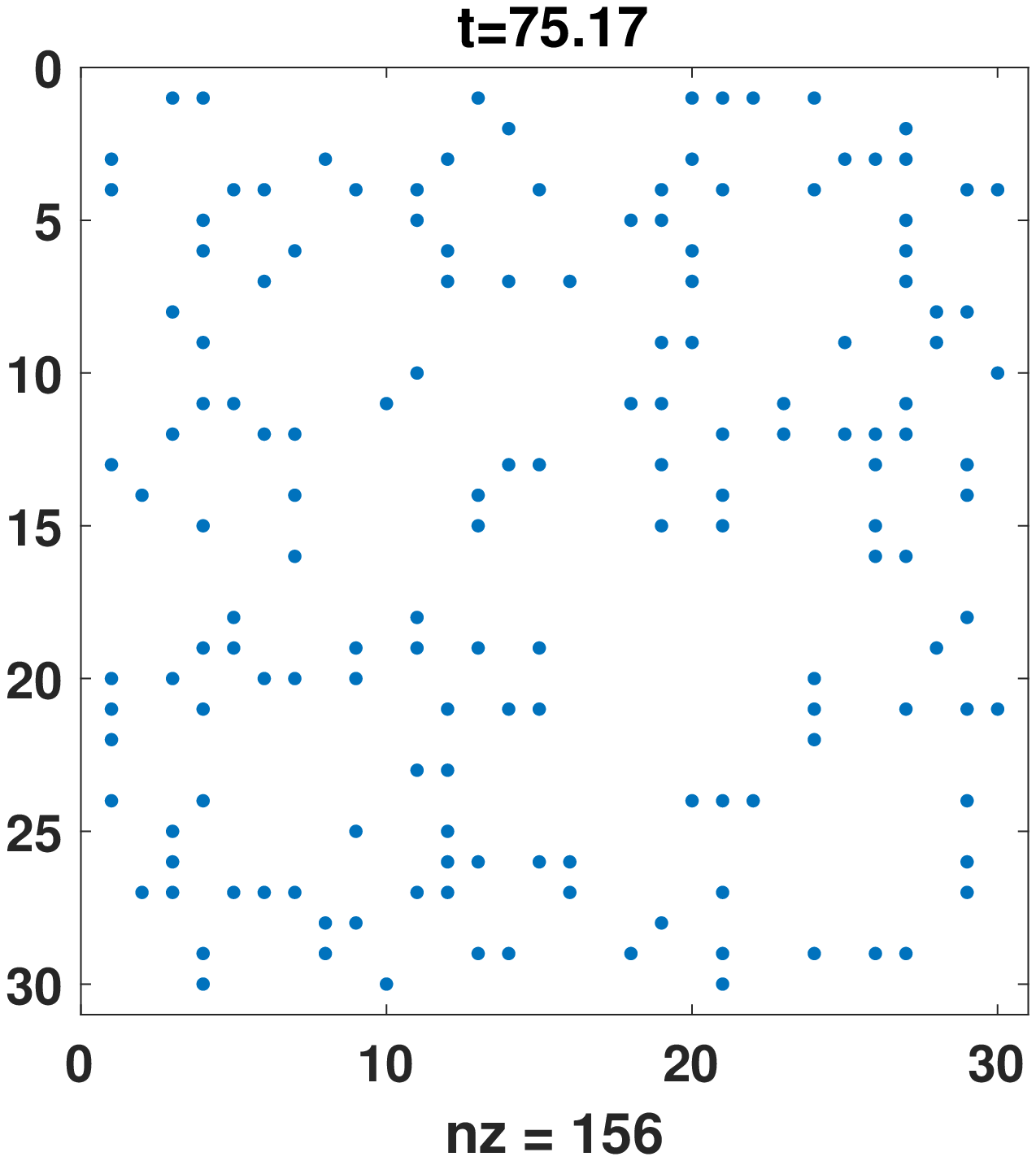}}}\\
		\subfigure{\scalebox{0.24}{\includegraphics{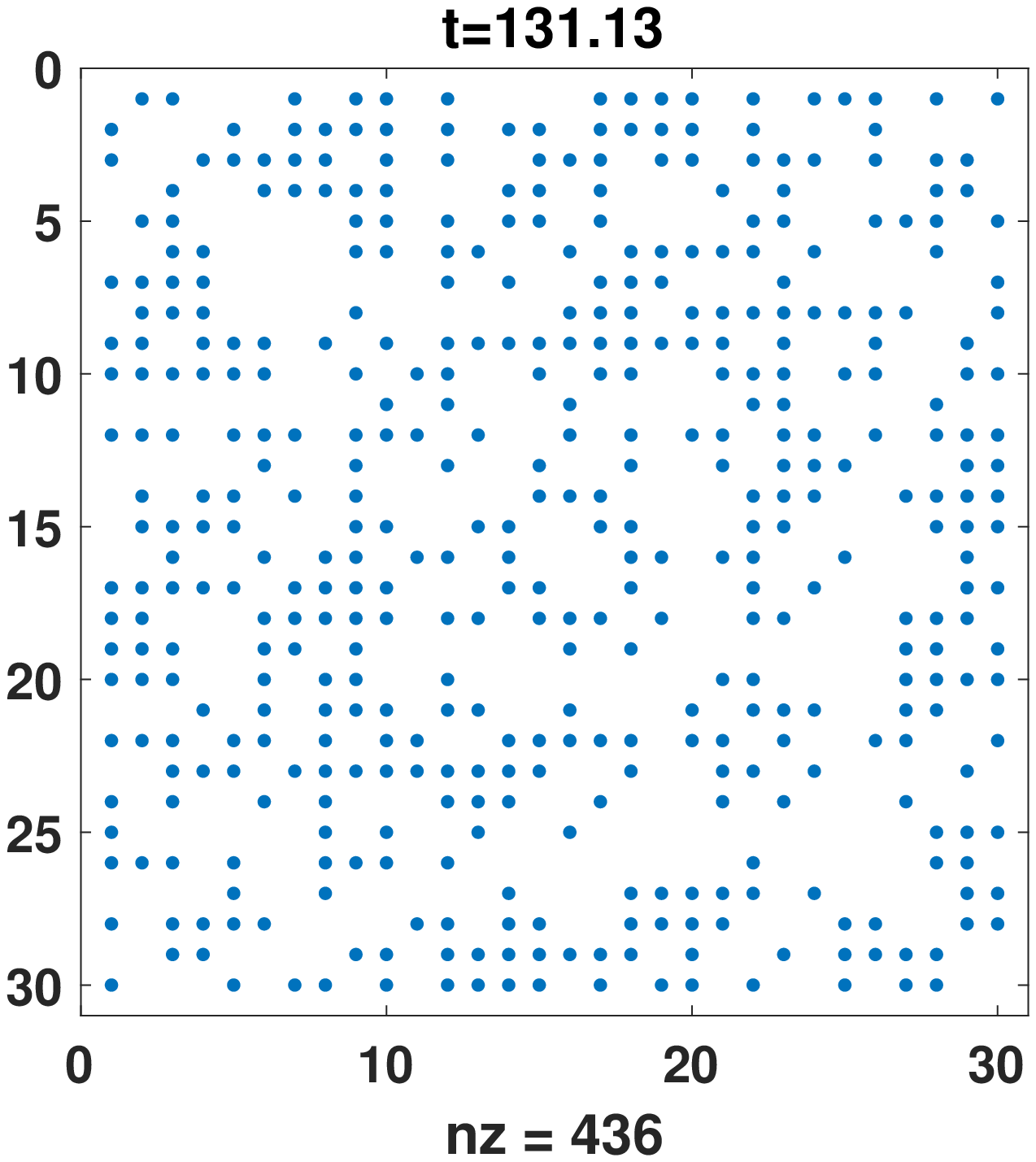}}}\quad
		\subfigure{\scalebox{0.24}{\includegraphics{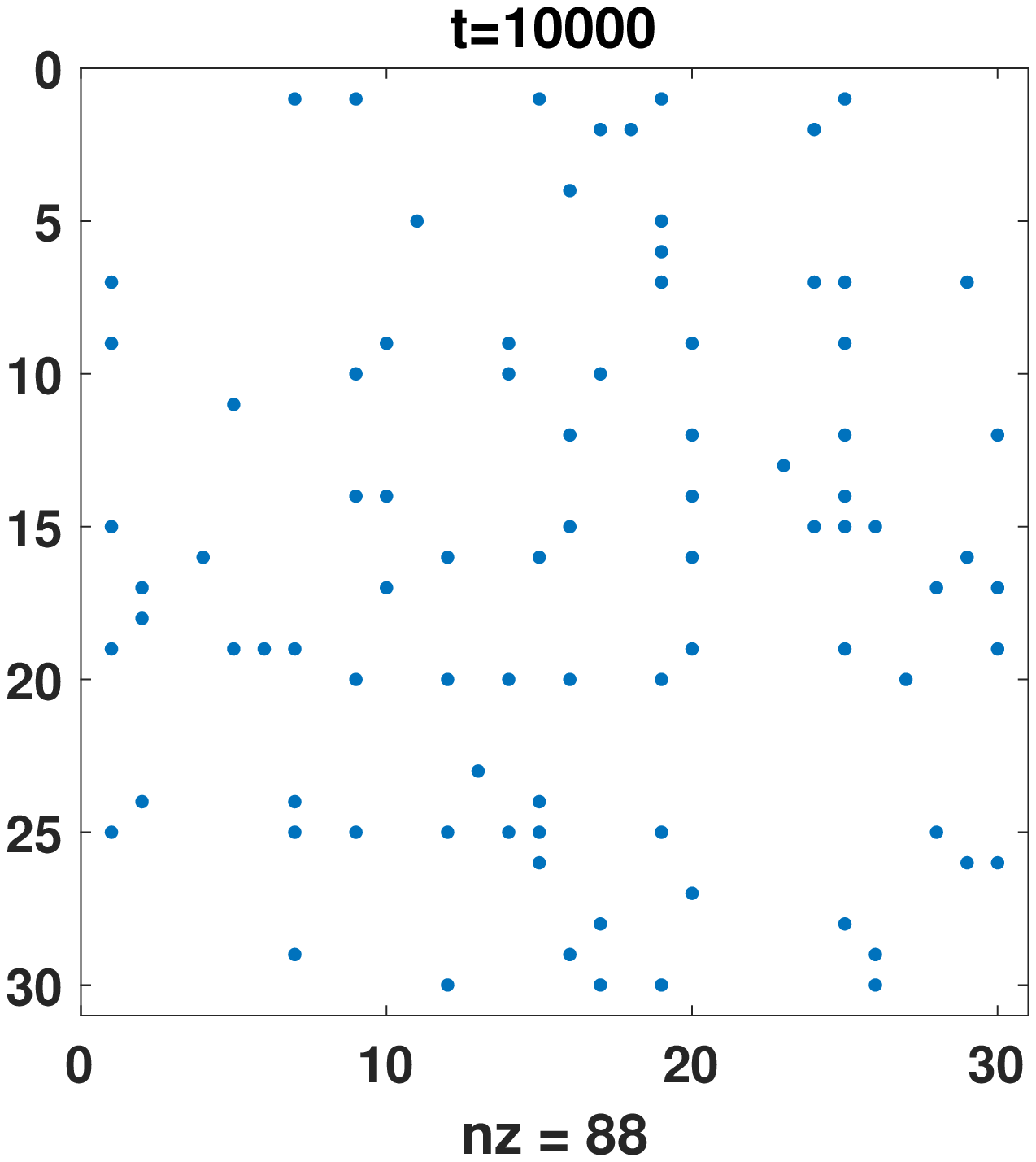}}}	
 \caption{Time evolution of the adjacency matrix $A(t)$. Nonzeros are represented by dots.
 We show the connectivity at the initial time (upper left), at time of low density (upper right), a time of 
 high density (lower left) and a later time
where the density is again low (lower right).
 Here, we have $n=30$ nodes and \texttt{nnz} denotes the number of nonzeros in the adjacency matrix.}
  \label{fig:spy}
\end{figure}
 
It is worth pointing out here that the elements of the adjacency matrix  
represent chemical species, and they can only take the value $0$ or $1$. However, given that there exist $N$ of them (using the fact that the adjacency  matrix is symmetric) the Markov chain simulated by the algorithm described above has a state space of dimension $2^{N}$ (all the possible symmetric graphs with $n$-nodes). This in turn implies that as the number of nodes $n$ increases the computational cost of the algorithm will become a 
severe limiting factor in understanding the dynamics of the system. One possible way of accelerating the simulation is to use the $\tau$-leaping algorithm \cite{G2001tau}. 
In the case $\tau = 1$, this would essentially 
reduce to the model in \cite{GHP12}. 

We visualise in Figure~\ref{fig:spy} the adjacency matrix $A(t)$ at different time instances
along a path computed by the SSA algorithm described above. Here a dot in the picture corresponds to the existence of an edge. 
We used reaction rate coefficients
\begin{equation}
c_1 = 0.025, \quad c_2 = 0.25,  \quad c_3 = 0.91,
\label{eq:bistablec}
\end{equation}
with $\wc3$ defined in (\ref{eq:c3scale}).
Here, the initial configuration was a sample from
the Erd\H{o}s-R\'enyi
class with parameter $p=0.3$, i.e.,  every 
edge has independent probability $p$ 
to exist at $t=0$.
As $t$ increases the system is seen to transition 
between states of low and high connectivity.  
We can summarize the overall connectivity by the edge density
\begin{equation} \label{eq:macro_q}
q(t) := \frac{1}{N}\sum_{i<j} \sum_{j} A(t)_{ij}.    
\end{equation}
In Figure~\ref{fig:spy}, the upper and lower right 
states have edge density of around
$0.18$ and $0.1$, respectively, and the lower left state has edge density of around $0.5$.
The transitioning becomes clearer 
in Figure~\ref{fig:q}, where 
we plot the evolution of $q(t)$
for a range of different system sizes,
again using an 
Erd\H{o}s-R\'enyi
initial condition with 
 $p=0.3$. 
As $n$ is increased from $30$ to $50$, the 
edge density spends longer periods of time around each level. 
For $n= 80$ and $n=100$, switching does not take place
over the interval $0 \le t \le 10^4$.
 We will return to these observations in the next section, where we construct and study an explanatory macroscopic 
 approximation.

\begin{figure}[h]
  \centering
		\subfigure{\scalebox{0.24}{\includegraphics{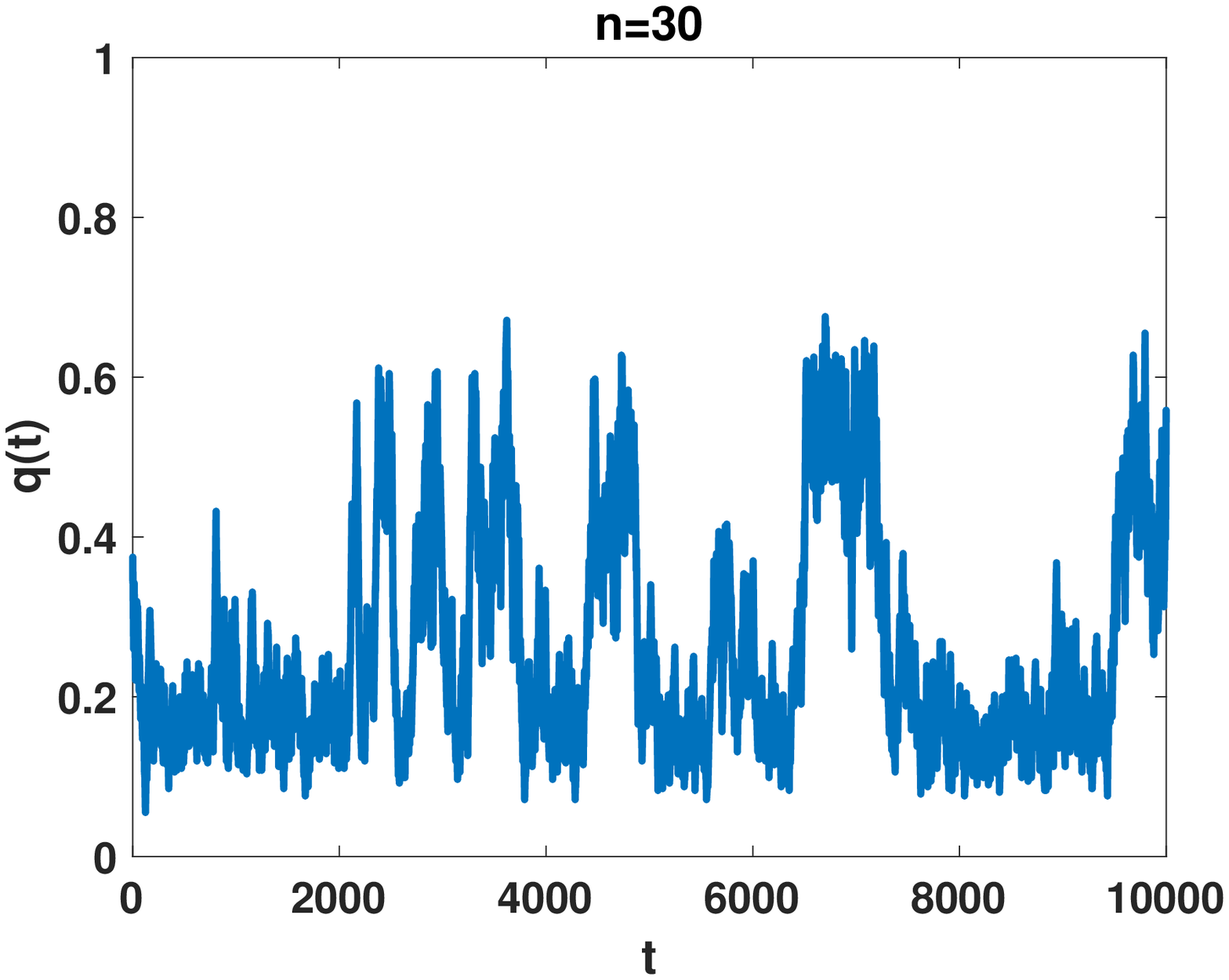}}}\quad
		\subfigure{\scalebox{0.24}{\includegraphics{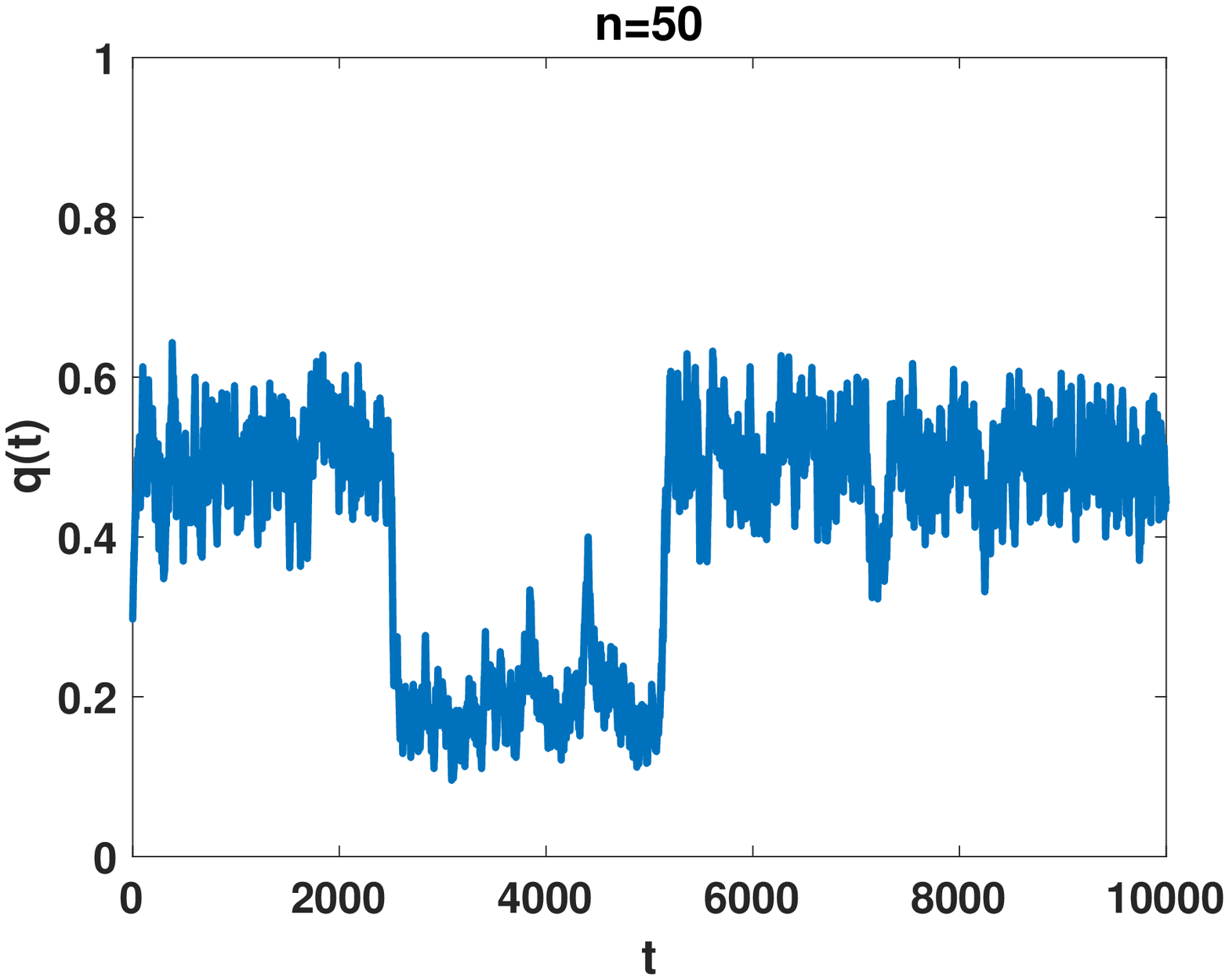}}}\quad
		\subfigure{\scalebox{0.24}{\includegraphics{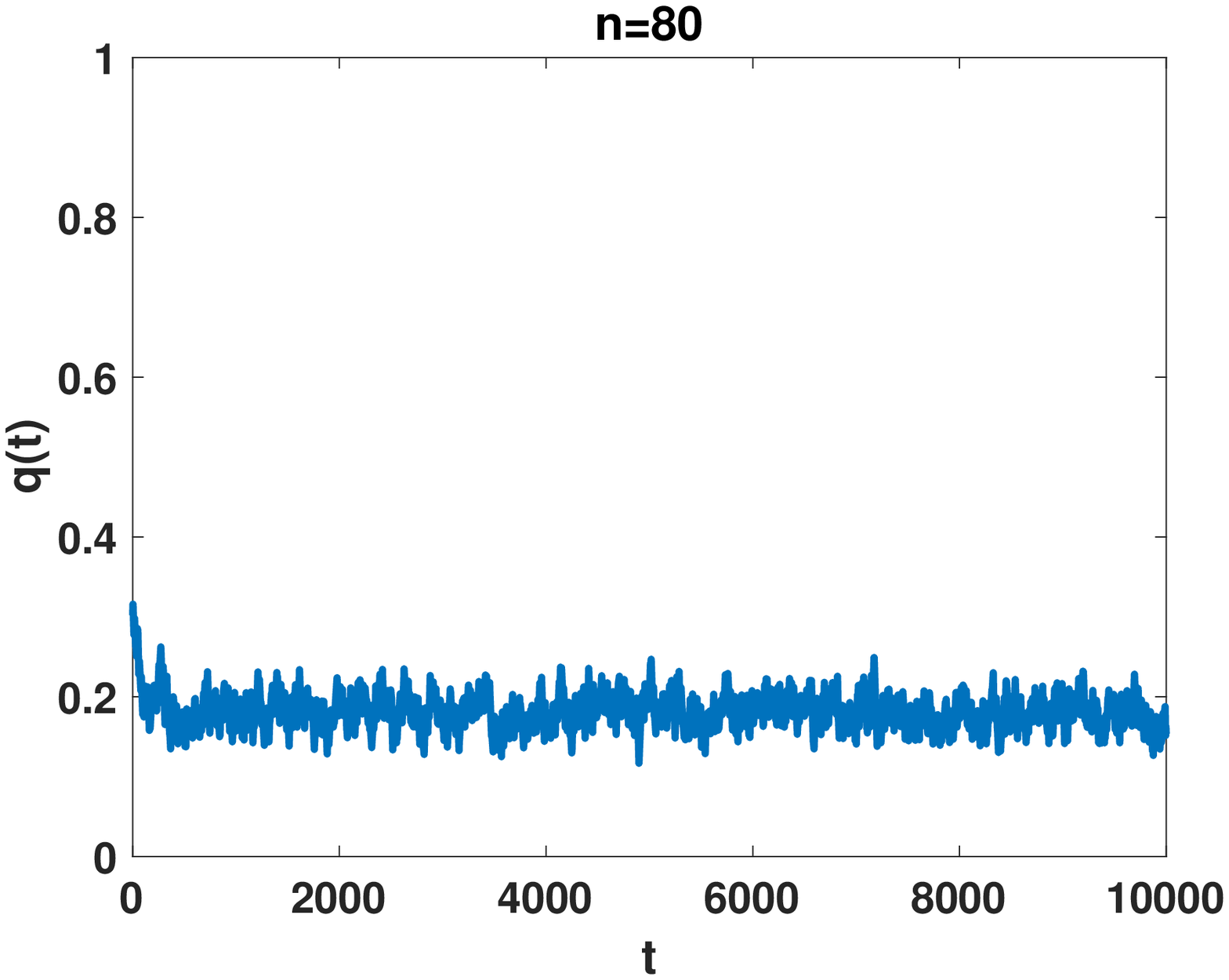}}}\quad
		\subfigure{\scalebox{0.24}{\includegraphics{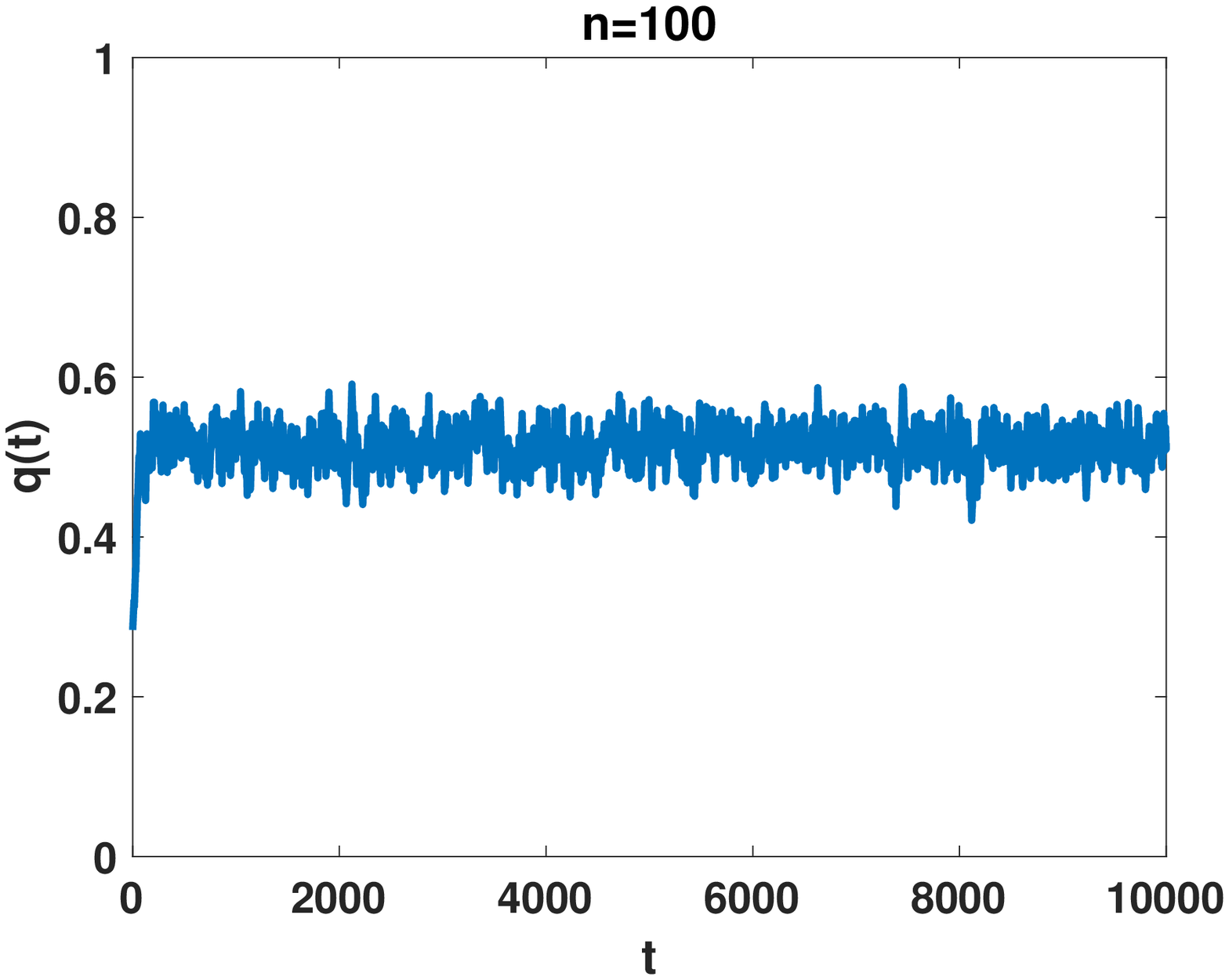}}}
 \caption{Evolution of edge density for $n=30,50,80,100$, up to time $t=10^4$.}
  \label{fig:q}
\end{figure}

\begin{figure}[htbp]
\centering
\scalebox{0.4}{
\includegraphics{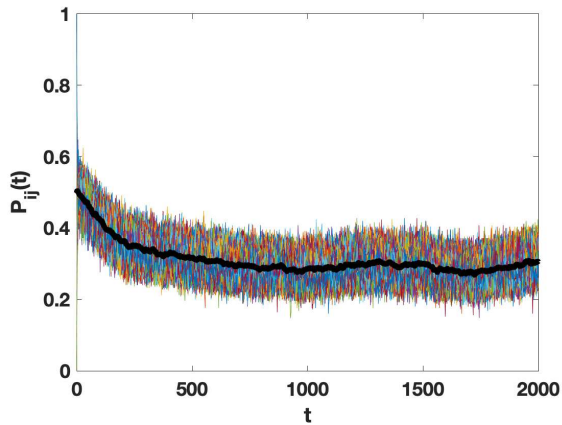}}
\caption{Time evolution of all $P_{ij}(t)$, the 
probability that the edge $i \leftrightarrow j$ exists at time $t$. Thick black line is the mean of $P_{ij}(t)$ over all
$i$ and $j$.}
\label{fig:pij}
\end{figure}

Given that the overall edge density is a macroscopic quantity, it is natural to ask if it captures the effective behavior of the system. To investigate this
issue, let $P_{ij}(t)$ denote the probability 
that the edge connecting $i$ and $j$ exists at time $t$.
In Figure~\ref{fig:pij} we superimpose all the individual values $P_{ij}(t)$ for the case $n=30$.
Here, we used a Monte Carlo approach to estimate each 
$P_{ij}(t)$ by applying SSA to compute $250$
independent paths. For each path, an initial state was chosen 
where half of the edges exist.
We see that after a fast transient all probabilities fluctuate around a value that 
lies between the low and high edge density regimes 
that were seen in Figures~\ref{fig:spy} and \ref{fig:q}.
(The mean of $P_{ij}(t)$ over all $i$ and $j$ is 
superimposed as a thick black line.) 
This implies that after the initial transient the topology of the network is homogeneous in the sense that 
no nodes have distinguishable behavior.







\section{Macroscopic Model}\label{sec:macro}

We saw in section~\ref{sec:micro} that 
 after an initial transient the specific topology of the network appears to be homogeneous. 
 Motivated by this observation we now define a simplified stochastic model. 
 In this \emph{macroscopic} model we do not retain any information about the 
topological structure of the network; we simply keep track of the number of edges.
This reduces the size of the state space dramatically;
in the microscopic model there 
are $2^N$ distinct networks. In this macroscopic model 
the scalar edge count ranges between $0$ and $N$. 
The simpler model allows long-time numerical simulation
to be performed more efficiently, and, as we show in subsection~\ref{subsec:steady}, it is also 
amenable to rigorous analysis. 

\subsection{Macroscopic Regime}

Our simplified, macroscopic version 
 of the microscale system 
(\ref{eq:OE})--(\ref{eq:EEO})
involves two species: 
$E$ represents an edge and 
$O$ represents a missing edge.
The species undergo three types of reaction, representing birth, death and triadic closure respectively:
\begin{eqnarray}
O &\rightarrow& E,    \label{eq:OEm} \\
E  &\rightarrow& O,   \label{eq:EOm} \\
O  + E + E  &\rightarrow&  
E + 
E+ E \label{eq:EEOm}.
\end{eqnarray}
The state vector may be written $X(t) \in \RR^2$, where 
$X_1(t)$ and $X_2(t)$ record the number of edges (species $E$) and missing edges (species $O$), respectively. Notice that $X_{1}(t)+X_{2}(t)=N$ for all $t\geq 0$.
The stoichiometric vectors for the three reactions are 
\[
\bnu_1 =
      \left[
           \begin{array}{r}
                 1 \\
                 -1 
           \end{array}
   \right],
   \quad
   \bnu_2 =
      \left[
           \begin{array}{r}
                 -1 \\
                 1 
           \end{array}
   \right],
   \quad
     \bnu_3 =
      \left[
           \begin{array}{r}
                 1 \\
                 -1 
           \end{array}
   \right].
\]

Reactions (\ref{eq:OEm}) and (\ref{eq:EOm}) are first order reactions that model spontaneous birth and death of edges.
These effects are independent of the network structure, and hence we reproduce the corresponding birth and death 
behavior from the microscale model exactly by taking propensity functions
$a_1(X(t)) = c_1 X_2(t)$, and
 $a_2(X(t)) = c_2 X_1(t)$, where $c_1$ and $c_2$ are the rate constants from 
 (\ref{eq:OE}) and 
 (\ref{eq:EO}).
 For the reaction (\ref{eq:EEOm}) representing triadic closure, we
 must introduce simplifying assumptions, because 
 the macroscale regime does not keep track of 
 individual node and edge labels. 
 Based on our observation in section~\ref{sec:micro}, we suppose 
 that at the current time $t$ edges 
 are uniformly and independently 
 distributed among the nodes---every possible 
 edge $i \leftrightarrow j$ (with $i<j$) has 
a chance $X_1(t)/N$ of existing.
 (We note that this is effectively the mean field assumption used in 
\cite{GHP12}.)
Our aim is now to approximate the number of 
\emph{open wedges}; that is, node triples $i,j,k$ such that edges $i \leftrightarrow j$ and $i \leftrightarrow k$ are present and edge $j \leftrightarrow k$ is absent. This
quantity records the number of opportunities for the 
microscale
reaction (\ref{eq:EEO}) to take place, and hence will 
be used in forming the propensity function at the macroscale level
in (\ref{eq:EEOm}).

To derive this approximation,
we first note there are $X_1(t)$
ways to choose an edge $i \leftrightarrow j$.
Then, fixing this $i$ and $j$, 
for every other node $k$
there is a chance of 
$\approx (X_1(t)-1)/N$ that the edge $i \leftrightarrow k$ 
also exists.
So, taking into account all $n-2$ nodes, there are 
 $\approx (n-2) X_1(t) (X_1(t) - 1)/N$ 
chains of the form $j \leftrightarrow i \leftrightarrow k$.
For any such chain, the chance of the third edge, 
$j \leftrightarrow k$,
being absent is $\approx X_2(t)/N$. So the overall
number of open wedges is 
$\approx (n-2) X_1(t) (X_1(t) - 1)  X_2(t)/N^2$.
 Hence, we find that a suitable propensity function for reaction 
  (\ref{eq:EEOm}) is
  \begin{equation}
  a_3(X(t)) = \wc3 (n-2) X_2(t) X_1(t) (X_1(t) - 1) /N^2,
  \label{eq:a3prop}
  \end{equation}
  where $\wc3$ is the rate constant 
  in 
 (\ref{eq:EEO}).
 
Since $X_1(t) + X_2(t) = N$, the 
macroscopic model
(\ref{eq:OEm})--(\ref{eq:EEOm})
may be written as a scalar, continuous-time birth and death 
process for which $X_1(t)$ takes integer values between $0$ and $N$.
Using $\lambda_i$ and $\mu_i$ to denote the overall birth and death rates, we have the general form 
\begin{eqnarray}
\PP  \left( X_1(t+\delta t) = i+1 | X(t) = i \right) &=& 
\lambda_i \, \delta t + o(\delta t),
  \text{~for~} i = 0, \ldots, N-1,  \label{eq:bdb}\\
  \PP  \left(( X_1(t+\delta t) = i-1 | X(t) = i \right) 
  &=& \mu_i  \, \delta t + o(\delta t),
  \text{~for~} i = 1, \ldots, N.
  \label{eq:bdd}
\end{eqnarray}
Based on the arguments above, these birth and death rates take the form 
\begin{eqnarray}
\lambda_i &=& c_1 (N-i) + \frac{\wc3 \, (n-2)}{N^2} (N-i) i (i-1), 
 \label{eq:li}\\
 \mu_i &=& c_2 \, i. 
 \label{eq:mi}
\end{eqnarray}
Using the scaling (\ref{eq:c3scale}),  
we obtain
the birth rate
\begin{equation}
  \lambda_i = N \left[ c_1 \left( 1 - \frac{i}{N} \right) 
       + c_3 \left(1 - \frac{i}{N}\right)
          \left(\frac{i}{N} \right)
           \left(\frac{i-1}{N} \right)
          \right].
\label{eq:new_lambda_i}
\end{equation}
We then rewrite the death rate (\ref{eq:mi}) as
\begin{equation}
 \mu_i = N c_2 \, \frac{i}{N}. 
 \label{eq:new_mu_i}
\end{equation}
We note in particular that the 
$\wc3 \propto 1/\sqrt{N}$
scaling 
from (\ref{eq:c3scale})  
has balanced the size of the 
separate terms in the
birth and death rates
(\ref{eq:new_lambda_i}) and (\ref{eq:new_mu_i}).

We emphasize that 
$c_1$, $c_2$ and $c_3$ are regarded as positive constants
and we have a family of 
birth and death processes
parametrized by the system size, $N$, representing the number of edges.

\subsection{Comparison between the Macroscopic and Microscopic Models}

Having derived an approximation to the 
microscopic model in the previous 
subsection it is 
natural to compare individual paths of 
the two stochastic models. We will choose two sets of reaction constants $c_{1},c_{2},c_{3}$.  
From now on, and for reasons that will become apparent, we will refer to the parameters in
(\ref{eq:bistablec}) as the bistable regime, 
and we will also use a second set of parameters with a larger birth rate, $c_1$, given by  
\begin{equation}
c_1 = 0.25, \quad c_2 = 0.25,  \quad c_3 = 0.91.
\label{eq:monostablec}
\end{equation}
We will refer to (\ref{eq:monostablec}) as the monostable regime.

The behavior of paths in the monostable regime can be seen  in the upper plots of Figure~\ref{fig:path_comparison}, for 
$n=30$ on the left and $n=100$ on the right. 
The initial state 
was an 
Erd\H{o}s-R\'enyi
sample with parameter $p=0.2$.
For both values of $n$, we see that the microscopic and macroscopic models are behaving in the same way, namely having the edge density increasing towards a high value and oscillating around it. Additionally, we superimpose in these plots the solution of the
deterministic mean field model in \cite{GHP12} (see also the discussion in Section \ref{sec:sde}). As we can see, both models agree closely with the mean field approximation, with the size of fluctuations becoming smaller for larger $n$.

\begin{figure}[h]
  \centering
		\subfigure{\scalebox{0.2}{\includegraphics{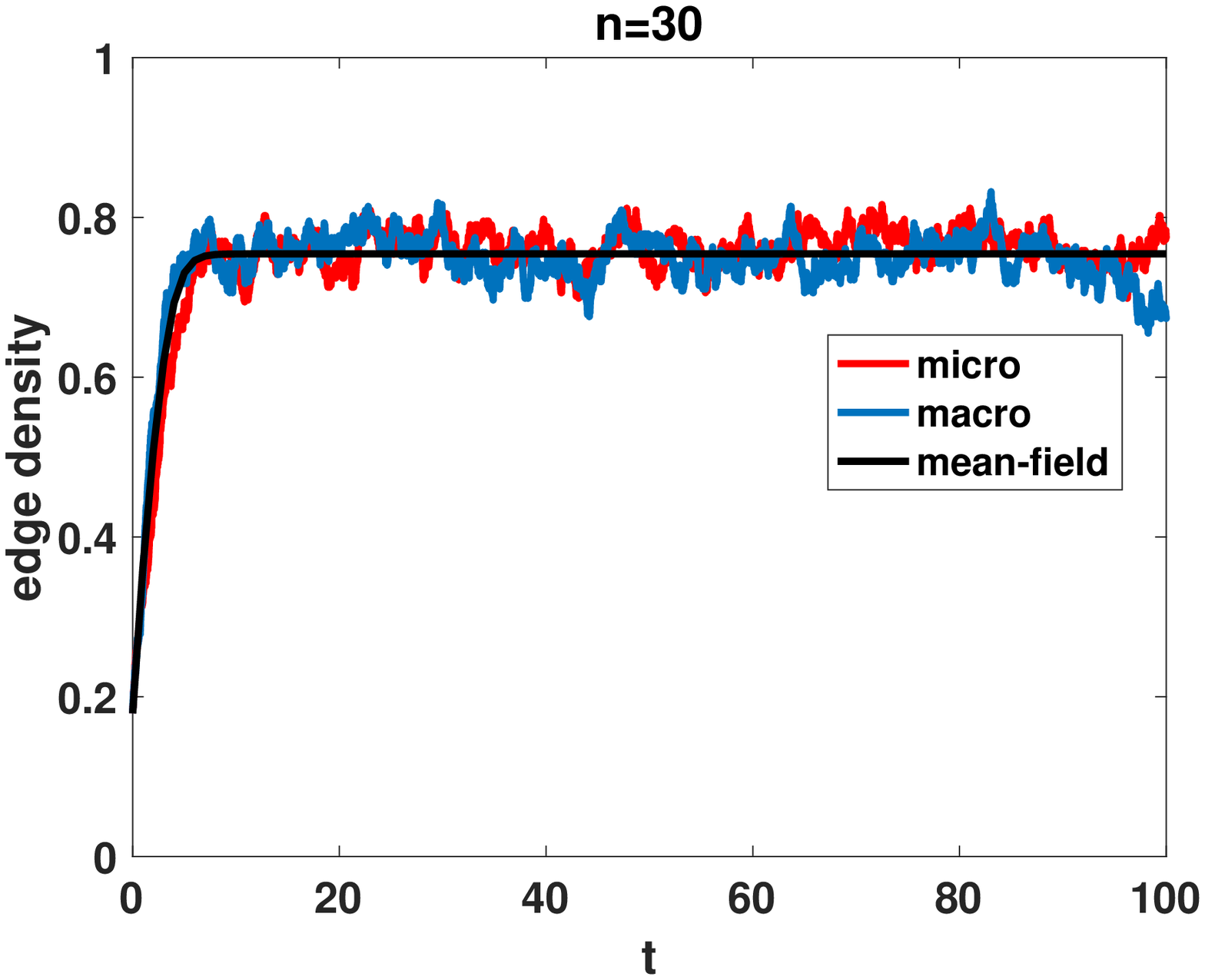}}}\quad
		\subfigure{\scalebox{0.2}{\includegraphics{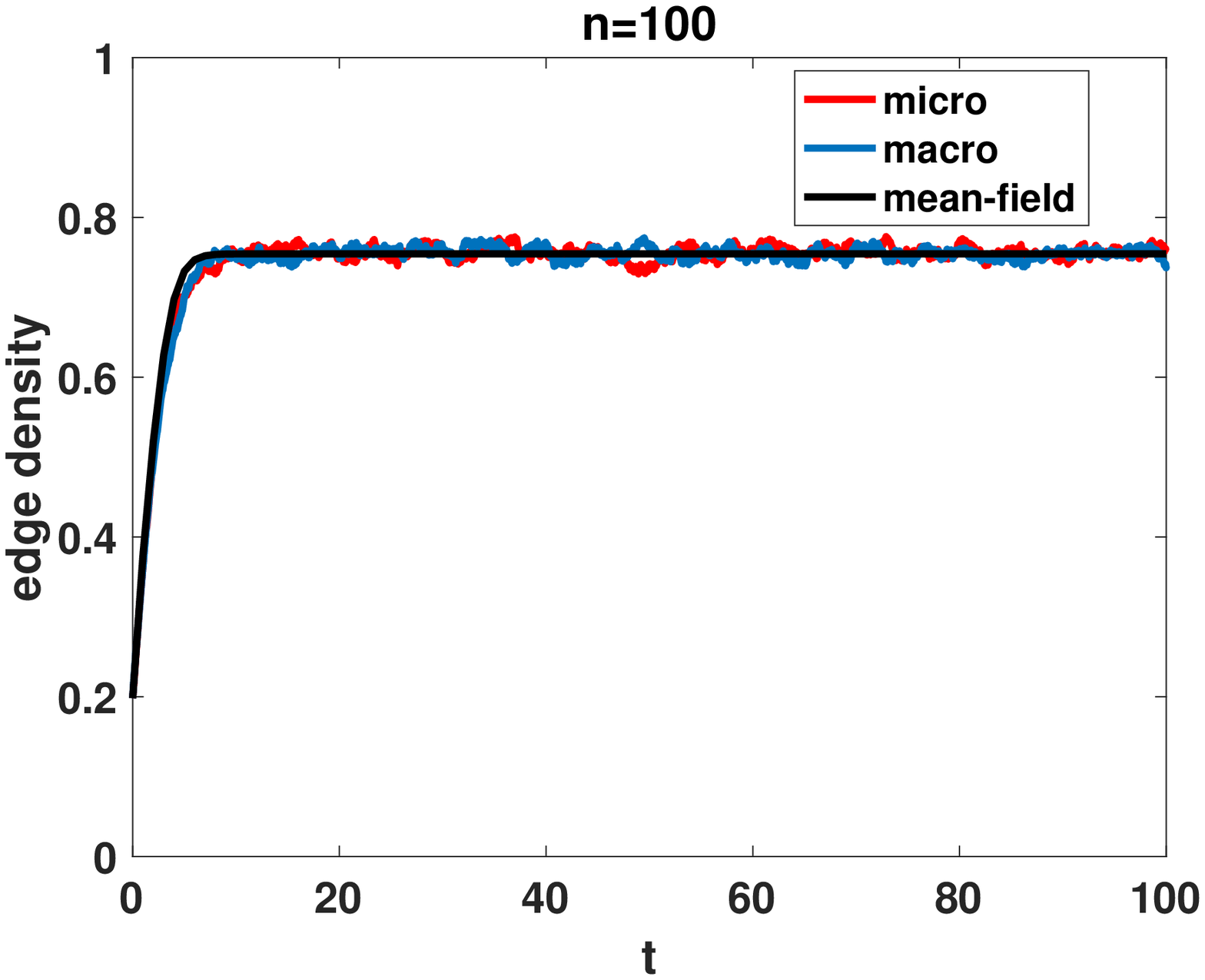}}}\\
		\subfigure{\scalebox{0.2}{\includegraphics{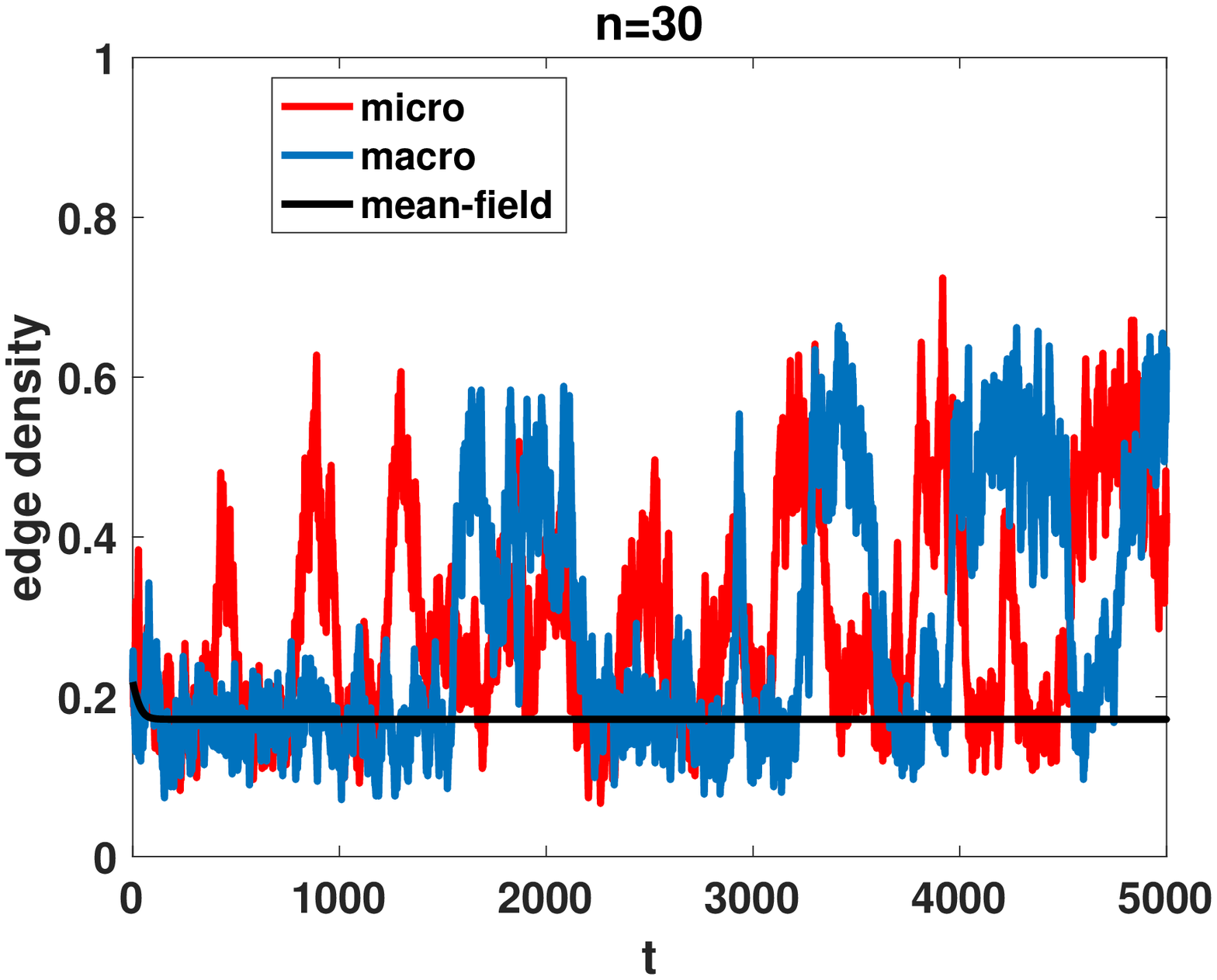}}}\quad
		\subfigure{\scalebox{0.2}{\includegraphics{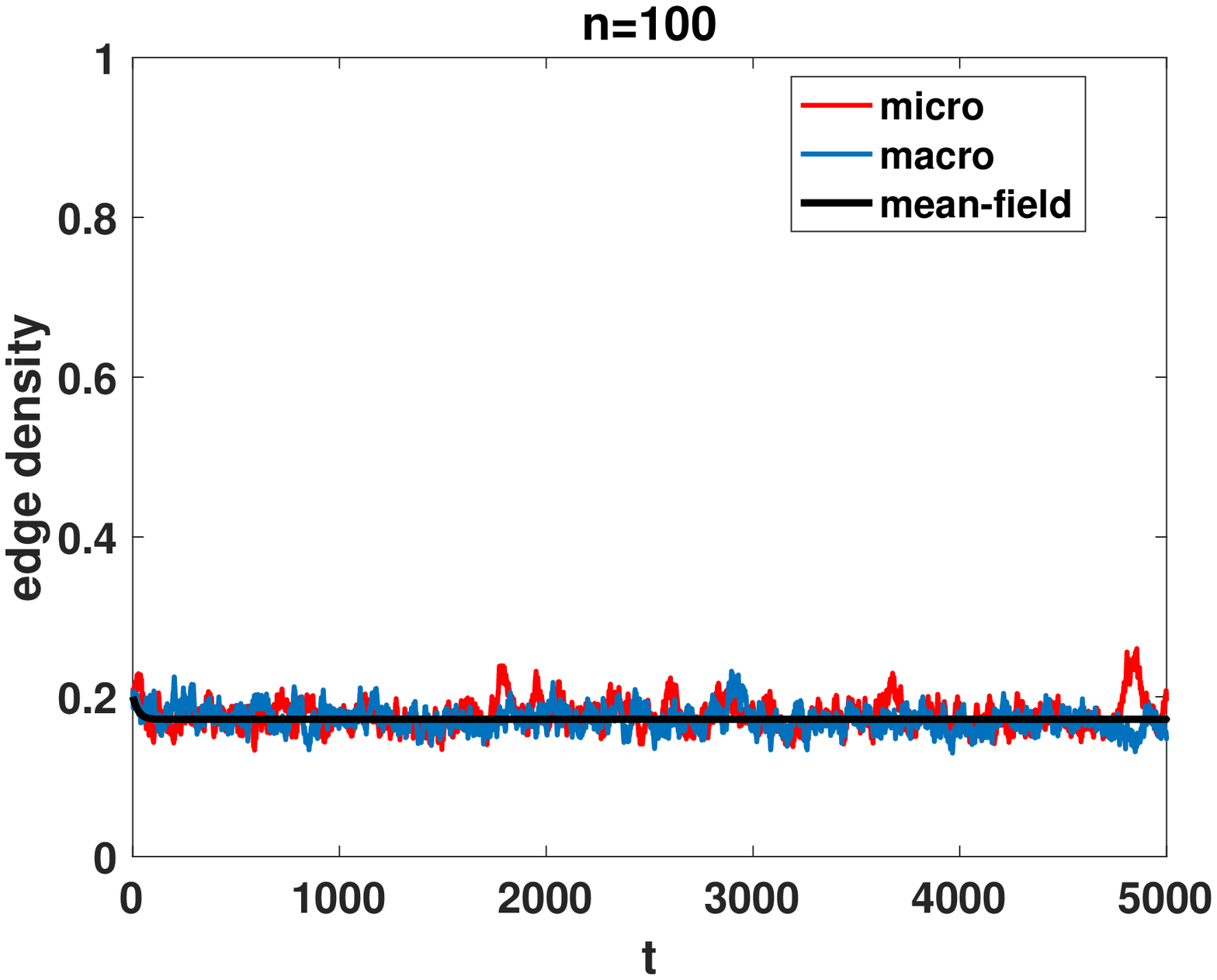}}}	
 \caption{Comparison between the microscopic, the macroscopic and the mean field model. 
 Erd\H{o}s-R\'enyi initial configuration with 
 $p = 0.2$ using 
 $n=30$ (left) and $n=100$ (right).
 Upper plots:
 monostable regime (\ref{eq:monostablec}).
 Lower plots:
 bistable regime (\ref{eq:bistablec}).}
  \label{fig:path_comparison}
\end{figure}
 
The behavior of the paths in the bistable 
parameter regime  (\ref{eq:bistablec}) can be seen in the lower plots of 
Figure~\ref{fig:path_comparison}. 
Again we have $n=30$ on the left and $n=100$ on the right and we used an 
Erd\H{o}s-R\'enyi initial configuration with $p=0.2$. 
We see again there is excellent qualitative agreement between the microscopic and macroscopic paths. However, unlike the monostable regime,
for $n = 30$ the paths now do not agree with the mean field model,
which, by its deterministic nature, 
does not switch between the low and high density regimes. 

\begin{figure}[h]
  \centering
		\subfigure{\scalebox{0.29}{\includegraphics{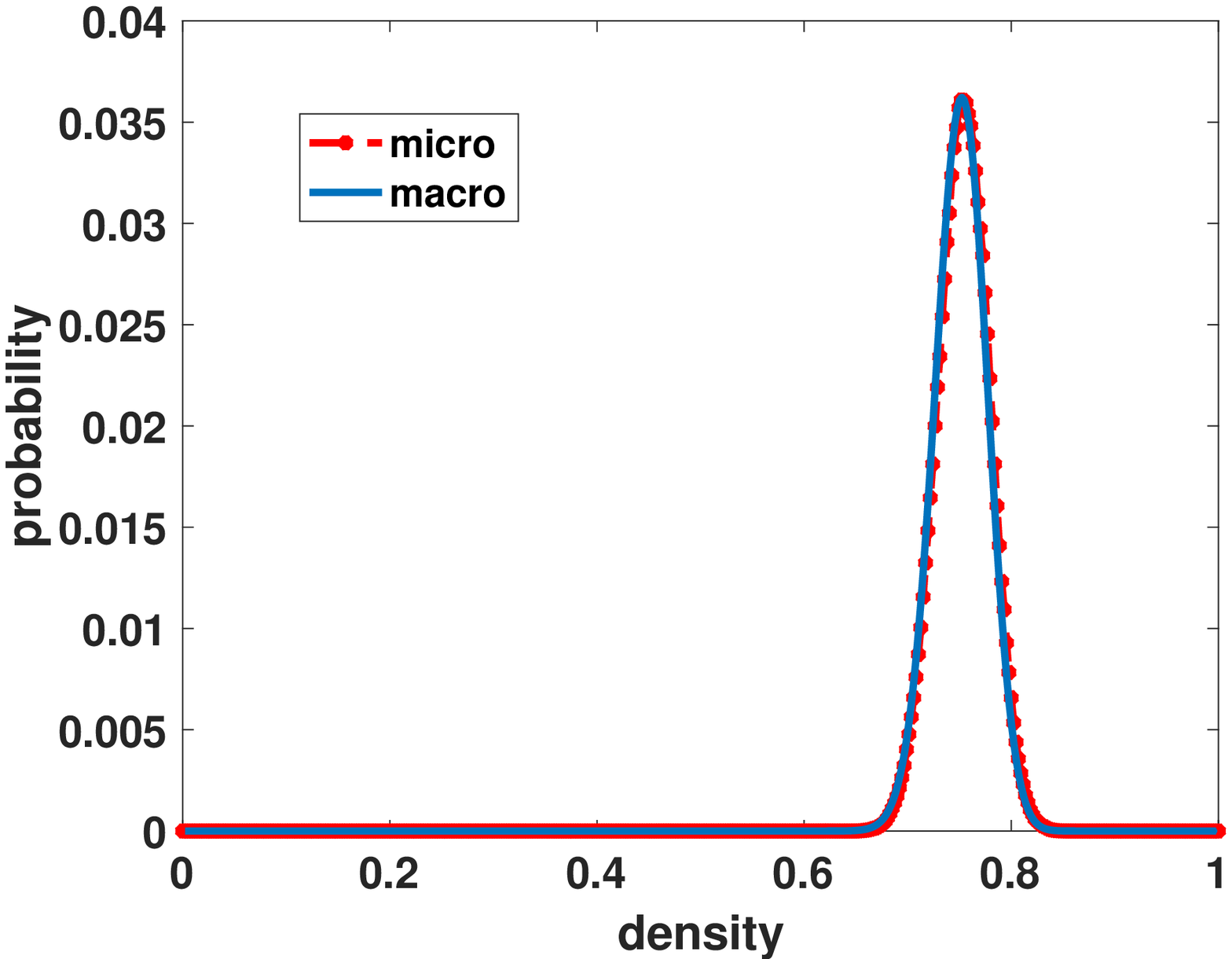}}}\quad
		\subfigure{\scalebox{0.29}{\includegraphics{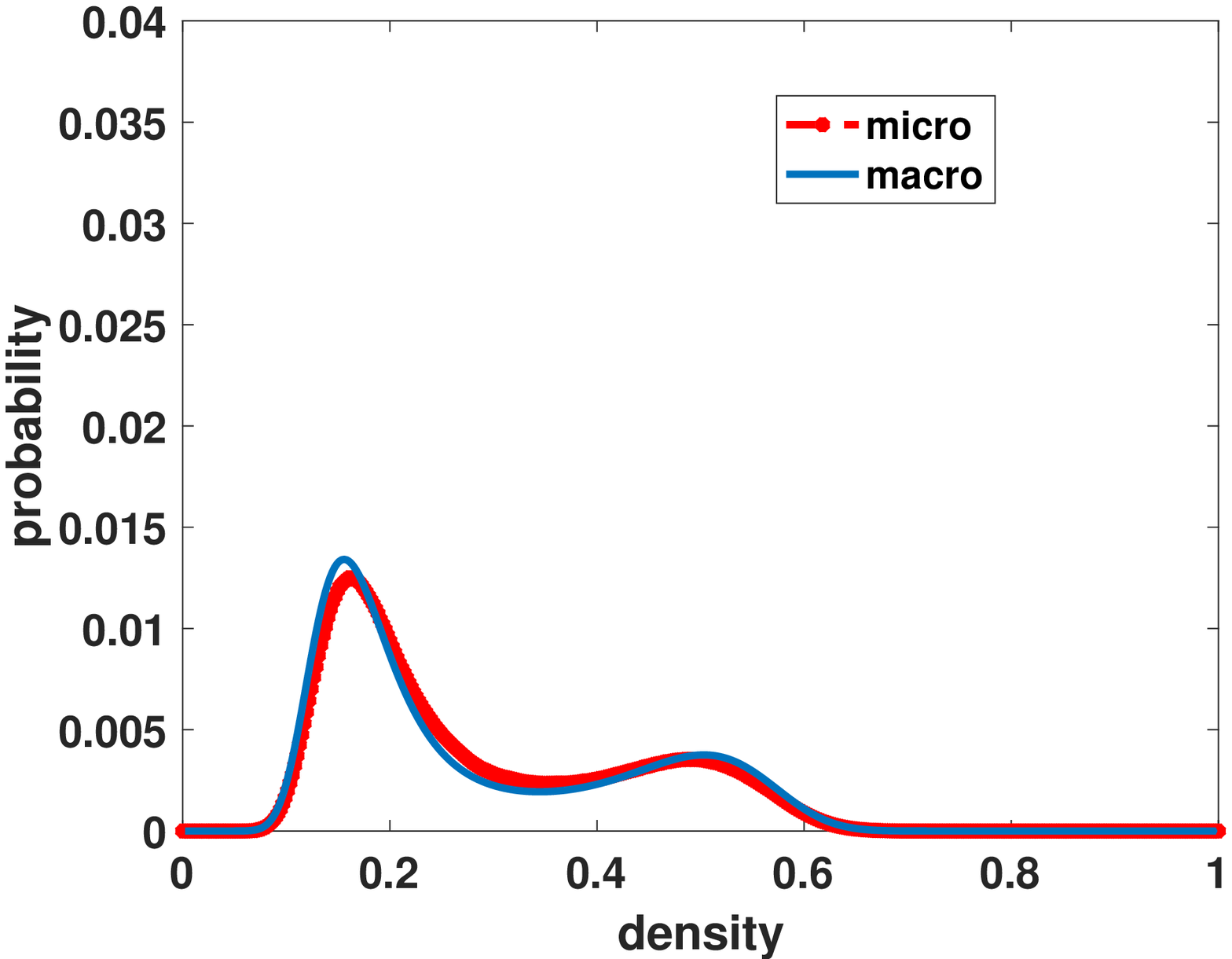}}}\\
	    \caption{Comparison between the steady states of the microscopic and macroscopic models in the monostable 
	    (left, parameter set (\ref{eq:monostablec})) and bistable 
	    (right, parameter set (\ref{eq:bistablec})) 
	    regimes for $n = 30$.}
  \label{fig:steady_state_comparison}
\end{figure}

Our second comparison of the two models concerns 
their long time behavior. In the case of 
the macroscopic model, we will see in the next 
section that it is possible to obtain an analytic 
expression for the steady state.
For the microscopic 
model it is not  possible to derive an analytic expression for the 
steady state, so we used  
a long-time simulation ($T=5 \times 10^6$ in the bistable case, $T=10^4$ in the monostable case). 
We plot this comparison in Figure \ref{fig:steady_state_comparison}.
Here, the horizontal axis represents the scaled state values $0,1/N,2/N,\ldots,1$.  
We observe excellent agreement between the two steady states both in the monostable and bistable regime.

\subsection{Steady State of Macroscopic Model}\label{subsec:steady}

We now show that the bistable behavior 
observed in Figures~\ref{fig:spy} and \ref{fig:path_comparison} 
can be explained by analysing the 
long term behavior of the macroscopic model.
Standard theory \cite{TaylKarl98} shows that in our ergodic
setting a process of the form 
(\ref{eq:bdb})--(\ref{eq:bdd}) 
has a stationary distribution 
\[
\pi_j =  \mathbb{P}\left( \lim_{t\to \infty} X_1(t) = j\right), \quad j=0, 1,\ldots, N, 
\]
that satisfies 
\begin{equation}
\label{eq:exact_steady}
\pi_{j} = \displaystyle\frac{\prod_{i=0}^{j-1}\lambda_i}{\prod_{i=1}^{j}\mu_j}\pi_0.
\end{equation}

In Figure~\ref{fig:steady} we show the steady state distribution 
in the 
bistable regime (\ref{eq:bistablec})
when the number 
of nodes is $n = 30$, $50$, $80$, and $100$.
This figure may be compared with Figure~\ref{fig:q}. 
For internal consistency 
we have regarded these four curves
as continuous-valued probability density functions
over $[0,1]$, thereby normalizing them to 
have unit area. 
We see 
in 
Figure~\ref{fig:steady}
that the bimodal steady state distribution has peaks at around 
$0.2$ and $0.5$, in agreement with 
the low and high connectivity regimes seen in   
Figure~\ref{fig:q}.
We also see that although the location of the 
two peaks seems to be fixed, their relative heights vary.
For $n=30$, the lower mode in 
Figure~\ref{fig:steady}
captures more probability,
which is consistent with the trajectory of $q(t)$ in  
Figure~\ref{fig:q} spending more time in the lower
connectivity regime. As $n$ increases the higher connectivity regime begins to dominate, and indeed the 
bimodality is barely visible for $n=100$ in 
Figure~\ref{fig:steady}.  This is consistent with the less frequent switching  
observed in Figure~\ref{fig:q}.


\begin{figure}[h]
  \centering
		\subfigure{\scalebox{0.24}{\includegraphics{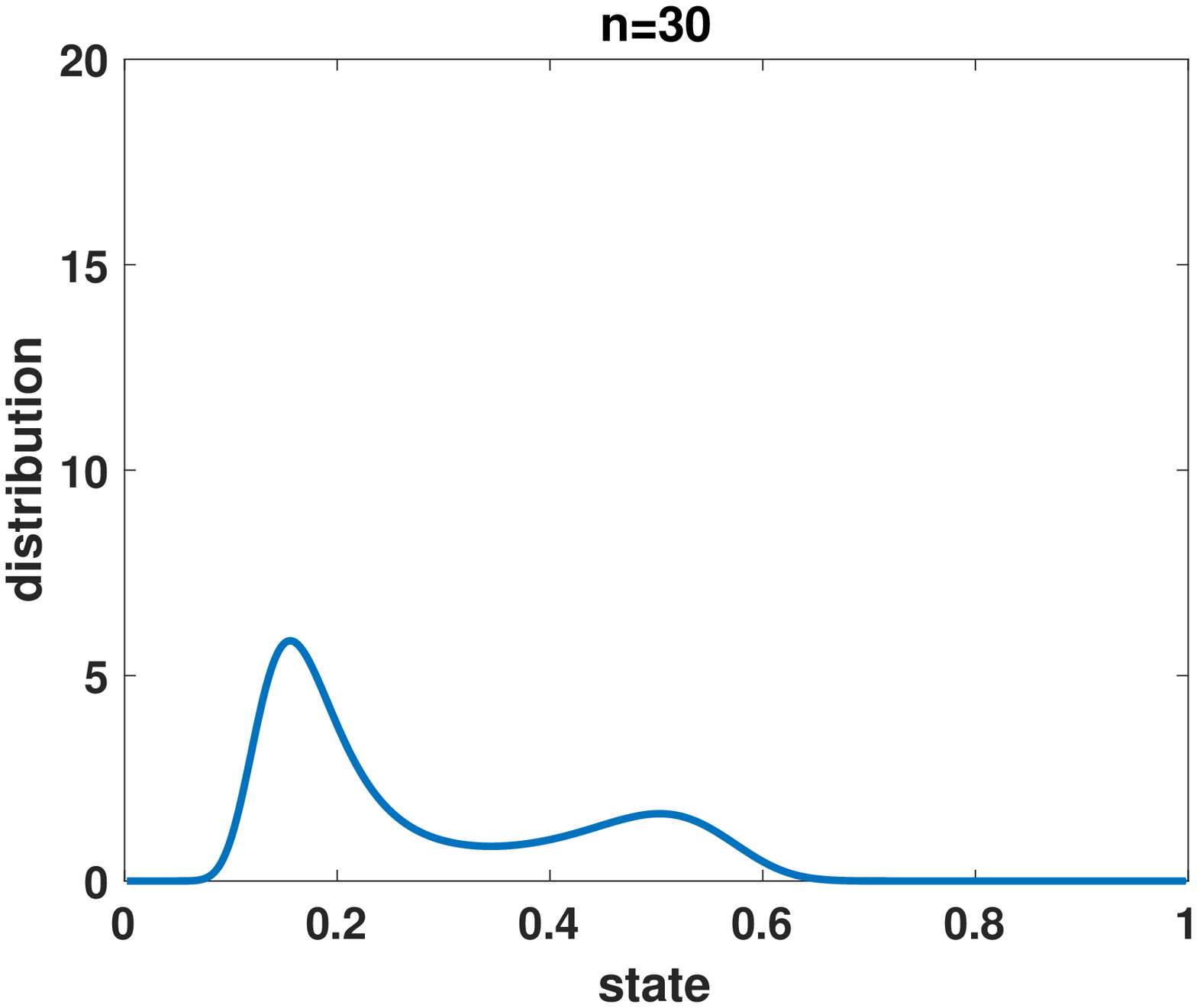}}}\quad
		\subfigure{\scalebox{0.24}{\includegraphics{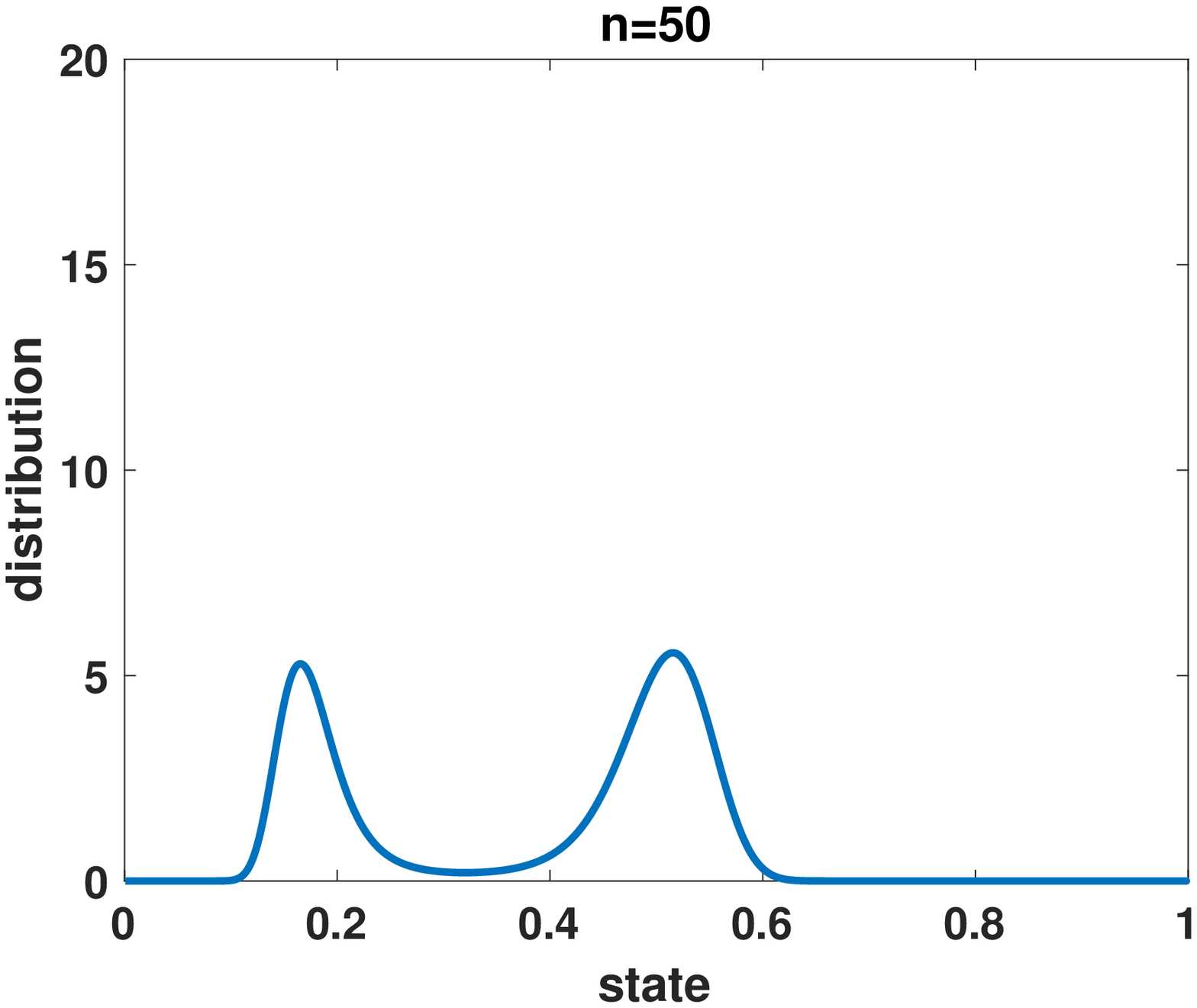}}}\quad
		\subfigure{\scalebox{0.24}{\includegraphics{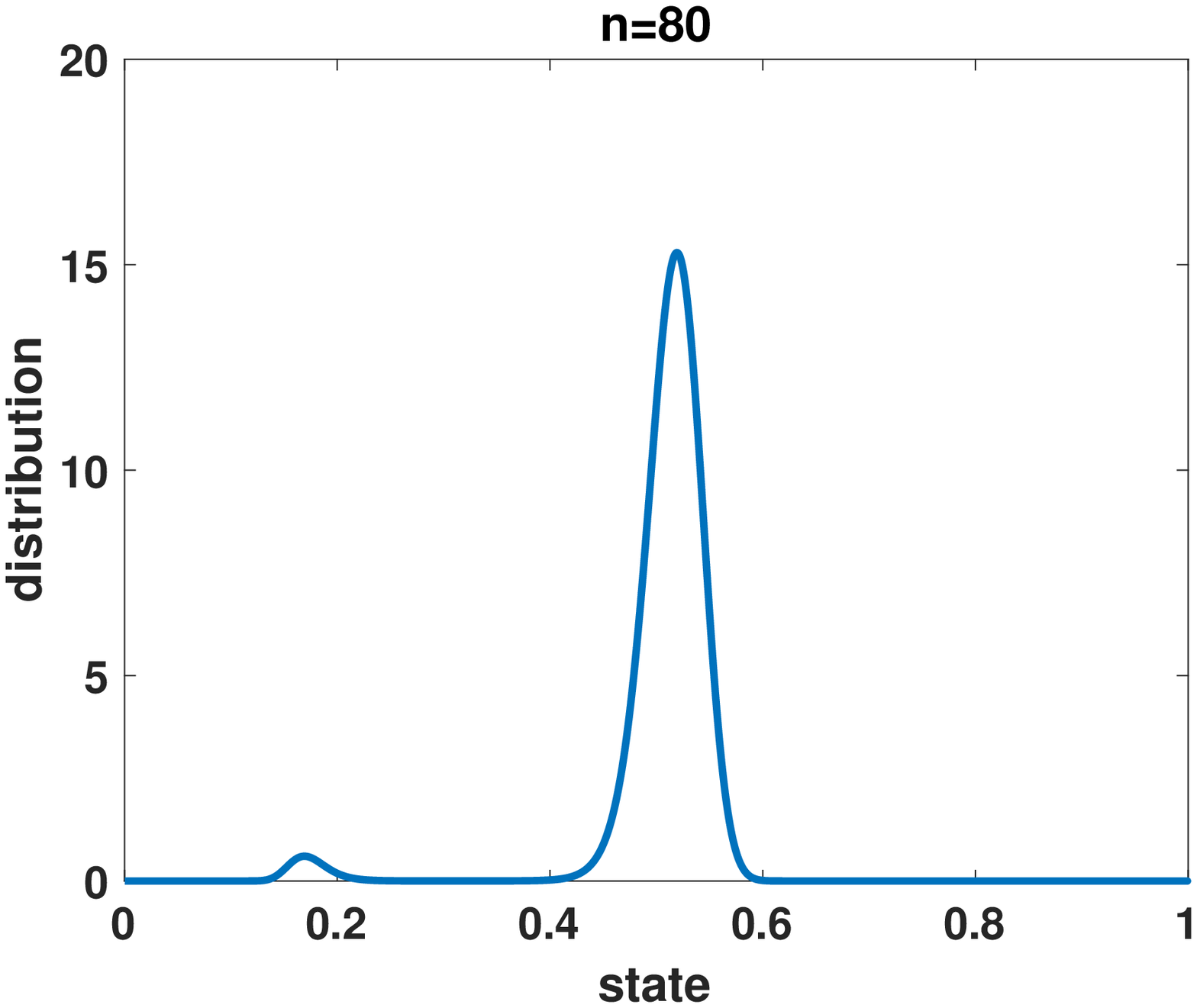}}}\quad
		\subfigure{\scalebox{0.24}{\includegraphics{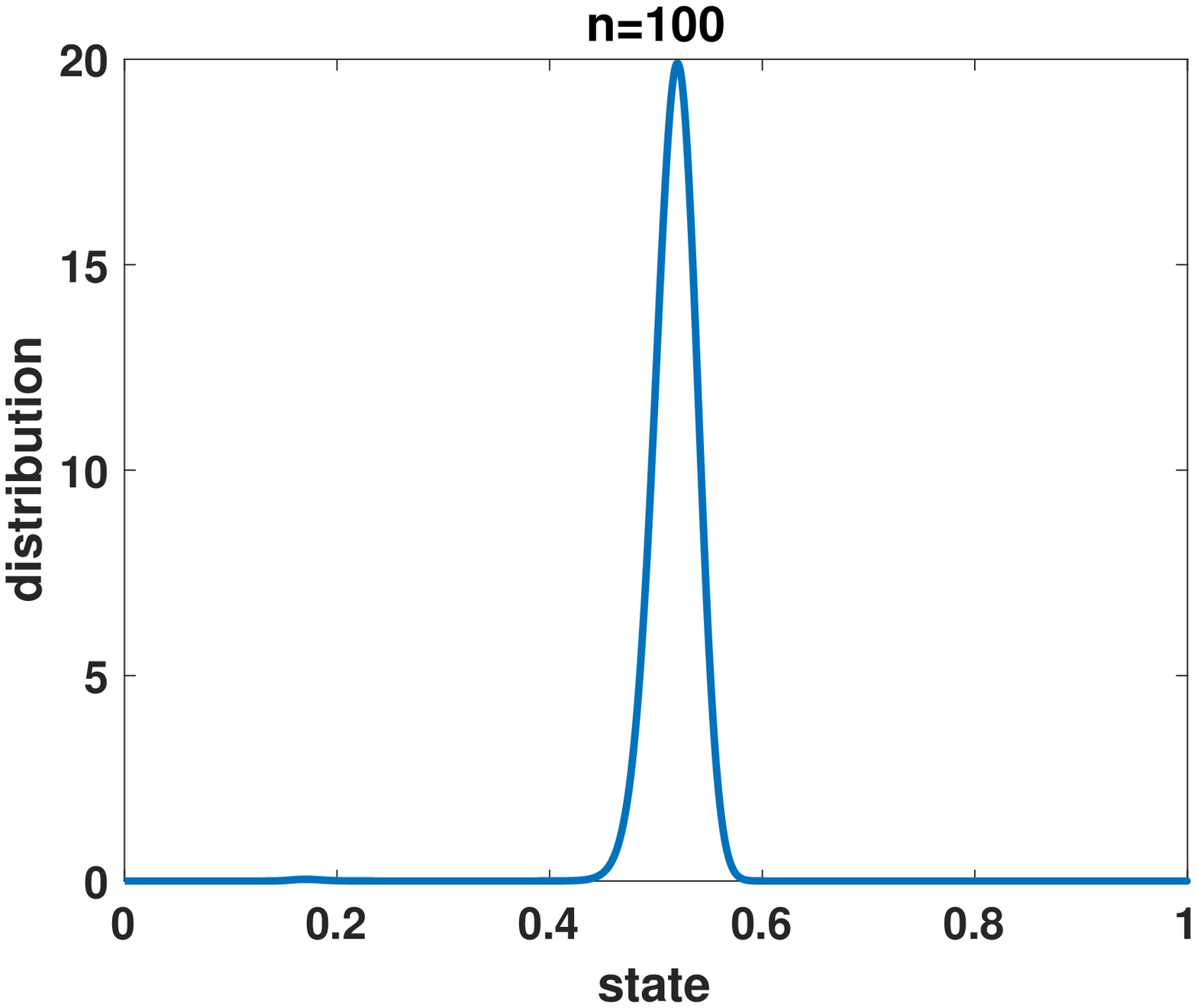}}}
 \caption{Steady state distribution of the macroscopic 
 model 
 (\ref{eq:bdb})--(\ref{eq:bdd}) for $n=30,50,80,100$ in the bistable regime (\ref{eq:bistablec}).}
  \label{fig:steady}
\end{figure}

The following result shows that the 
bimodality of the macroscopic steady state 
is connected to the roots of the cubic equation 
(\ref{eq:cubic}), which
depend only on 
$c_1$, $c_2$ and $c_3$.
We note that these roots are 
$p_1^\star \approx 0.17$,
$p_2^\star \approx 0.31$
and $p_3^\star \approx 0.52$
for the bistable regime used in 
 Figures~\ref{fig:q} and 
\ref{fig:steady}.
(We also note that the cubic equation (\ref{eq:cubic})
has a single real root of 
$p^\star \approx 0.75$
in the monostable regime (\ref{eq:monostablec}) used in the upper plots of Figure~\ref{fig:path_comparison}.)

\begin{theorem}
Consider the scalar cubic equation 
\begin{equation}
(1-p) ( c_1 + c_3 p^2) - c_2 p = 0,
\label{eq:cubic}
\end{equation}
for which any real root must necessarily lie in $(0,1)$. 
Then 
\begin{description}
    \item[(a)]
the cubic equation has one real root
    $0<p^\star<1$
if and only if    
for sufficiently large $N$
    the family of birth and death processes 
    (\ref{eq:bdb})--(\ref{eq:bdd}) with 
(\ref{eq:new_lambda_i}) and (\ref{eq:new_mu_i})
has a unimodal steady state with maximum probability at state $j$,
where $j/N = p^\star + O(1/N)$,
\item[(b)]
    the cubic equation has three real roots 
    $0< p_1^\star < p_2^\star < p_3^\star <1$
    if and only if    
for sufficiently large $N$
    the family of birth and death processes 
    (\ref{eq:bdb})--(\ref{eq:bdd}) with 
(\ref{eq:new_lambda_i}) and (\ref{eq:new_mu_i})
has a bimodal steady state with locally maximum probability at state $j_1$,
where $j_1/N = p_1^\star + O(1/N)$ and 
at 
 state $j_3$,
where $j_3/N = p_3^\star + O(1/N)$, and
with locally minimum probability at state $j_2$ where  
$j_2/N = p_2^\star + O(1/N)$.
\end{description}
\label{thm:steady}
\end{theorem}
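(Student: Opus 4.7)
The plan is to exploit the fact that, for a birth--death chain, the modes of the stationary distribution are detected by the ratio
$$r_j := \frac{\pi_j}{\pi_{j-1}} = \frac{\lambda_{j-1}}{\mu_j}.$$
A state $j$ is a local maximum of $\pi$ precisely when $r_j \ge 1 \ge r_{j+1}$, and a local minimum precisely when $r_j \le 1 \le r_{j+1}$; monotonicity of $\pi$ between consecutive sign changes of $r_j - 1$ then pins down the global shape. Substituting (\ref{eq:new_lambda_i}) and (\ref{eq:new_mu_i}) and writing $p = j/N$, a direct computation gives
$$r_j - 1 = \frac{1}{c_2 p}\bigl[\,f(p) + O(1/N)\,\bigr],$$
where $f(p) := (1-p)(c_1 + c_3 p^2) - c_2 p$ is the cubic (\ref{eq:cubic}) and the $O(1/N)$ remainder is uniform on every compact subset of $(0,1)$. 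Thus roots of $f$ in $(0,1)$ are in asymptotic one-to-one correspondence with sign changes of $r_j - 1$.

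Next I would analyse $f$ itself. Since $f(0) = c_1 > 0$ and $f(1) = -c_2 < 0$, the intermediate value theorem yields at least one root in $(0,1)$. The leading coefficient is $-c_3 < 0$, and any real root must lie in $(0,1)$ because $f > 0$ for $p \le 0$ and $f < 0$ for $p \ge 1$. Consequently $f$ has either one simple root or three simple roots in $(0,1)$, apart from the non-generic double-root boundary which lies outside the hypotheses of the theorem.

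For part (a), a single simple root $p^\star$ forces $f$ to switch sign from positive to negative exactly once. By the uniform approximation, for $N$ sufficiently large the discrete quantity $r_j - 1$ inherits the same sign pattern, with the crossing index $j^\star$ satisfying $j^\star/N = p^\star + O(1/N)$; hence $\pi_j$ is increasing for $j < j^\star$ and decreasing afterwards, giving a unimodal stationary distribution with its maximum at $j^\star$. For part (b), three simple roots $p_1^\star < p_2^\star < p_3^\star$ produce the sign pattern $+,-,+,-$ on the corresponding subintervals of $(0,1)$. Transferring this pattern to $r_j - 1$ makes $\pi_j$ increase, decrease, increase, and decrease in turn, yielding local maxima at indices $j_1, j_3$ and a local minimum at $j_2$ with the stated $O(1/N)$ asymptotics. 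The converse directions follow because any unimodal, respectively bimodal, shape of $\pi$ requires exactly one, respectively three, sign changes of $r_j - 1$, and uniform convergence transfers these back to the same number of roots of $f$.

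The main obstacle will be turning the $O(1/N)$ bookkeeping into a rigorous statement about the location of sign changes. One needs to show that, near each root $p^\star_k$ of $f$, the unique sign change of the discretised quantity $r_{\lfloor pN\rfloor} - 1$ lies within an $O(1/N)$ window of $p^\star_k$ for all sufficiently large $N$, and that no spurious extra sign changes appear elsewhere. Because the theorem excludes the degenerate double-root case, every root is simple, so $f'(p^\star_k)$ is bounded away from zero and a quantitative implicit-function-style estimate delivers both facts at once. The only delicate region is the boundary $p \to 0$, where the factor $1/(c_2 p)$ in the formula for $r_j - 1$ blows up; but there $f(0) = c_1 > 0$ still forces $r_j > 1$ for small $j$, so this region contributes no additional crossings and the analysis is complete.
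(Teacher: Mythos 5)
Your proposal follows essentially the same route as the paper: both arguments reduce the shape of the stationary distribution to the sign pattern of the consecutive ratio $\pi_{j+1}/\pi_j=\lambda_j/\mu_{j+1}$, show this ratio equals the cubic-derived quantity up to a uniform $O(1/N)$ error, and transfer sign changes between the cubic's roots and the discrete monotonicity pattern (including the same converse argument). Your treatment is, if anything, slightly more careful than the paper's about the $p\to 0$ boundary and the simplicity of the roots, but the underlying idea and all key steps coincide.
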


\begin{proof}  
Let 
\begin{equation}
f(p) = \frac{(1-p)(c_1 + c_3 p^2)}{c_2},
\label{eq:fdef}
\end{equation}
so a root of (\ref{eq:cubic}) corresponds to $f(p) = p$.
Because $f(p) > 0 $ for $p  <0$ and
$f(p) < 0 $ for $p  > 1$, we see that any real root must lie in 
$(0,1)$.
We know from (\ref{eq:exact_steady}) that 
$\pi_{j+1}/\pi_j = \lambda_j/\mu_{j+1}$.
Hence, from 
(\ref{eq:new_lambda_i}),
(\ref{eq:new_mu_i})
and
(\ref{eq:fdef}), 
we see that 
that for any $j = 0,1,\ldots,N-1$, 
\begin{equation}
    \left|
       \frac{\pi_{j+1} } { \pi_j} 
       - \frac{f(j/N)}{j/N}
       \right|
       \le \frac{C}{N}, 
       \label{eq:fdiff}
\end{equation}
where $C$ is a constant that is uniform in $j$.

For case (a), suppose that the cubic has one root $p^\star$ in $(0,1)$.
Then, by construction,
\[
\frac{f(p)}{p} > 1 \text{~for~} p \in (0,p^\star) 
\quad 
\text{and} \quad 
\frac{f(p)}{p} < 1 \text{~for~} p \in (p^\star,1).
\]
Let $j^\star$ be the largest 
$j$ such that $j/N \le p^\star$. Then, for sufficiently large $N$,
we see from (\ref{eq:fdiff}) that 
\[
\frac{\pi_{j+1}}{\pi_j} > 1 \text{~for~} j < j^\star 
\quad 
\text{and} \quad 
\frac{\pi_{j+1}}{\pi_j} < 1
\text{~for~} j > j^\star.
\]
Hence,
$\pi_0 < \pi_1 < \cdots < \pi_{j^\star}$ and
$ \pi_{j^\star+1} >  \pi_{j^\star+2}> \cdots > \pi_{N}$.
So one or both of
$\pi_{j^\star}$
and 
$\pi_{j^\star+1}$
give a maximum value.

On the other hand, suppose we have a unimodal 
distribution, that is, 
$\pi_0 < \pi_1 < \cdots < \pi_{j^\star}$ and
$ \pi_{j^\star+1} >  \pi_{j^\star+2}> \cdots > \pi_{N}$.
Then, by 
(\ref{eq:fdiff}),
\[
\frac{f(p)}{p} + O(1/N) > 1 \text{~for~} p = \frac{j}{N} \text{~with~} p < p^\star  
\]
and
\[
\frac{f(p)}{p} + O(1/N) < 1 \text{~for~} p = \frac{j}{N} \text{~with~} p > p^\star.
\]
For this to be true for all sufficiently large $N$, by continuity we 
must have a unique root in $(0,1)$ at $p^\star$.

Case (b) may be proved in a similar manner.
\end{proof}

\subsection{Exit Times}

Figures~\ref{fig:q} and \ref{fig:steady} suggest that for large $n$ there is a metastability effect, where 
the process spends a long time in a  
one regime before eventually switching to another.
It is, of course, extremely challenging to 
verify such behavior by directly simulating a trajectory  \cite{Carr1995}.
To gain further understanding of this type of behavior, we will study the expected time taken to move between modes in the bimodal case of the macroscopic model. 

Let $G \in \mathbb{R}^{(N+1) \times (N+1)}$ be the
tridiagonal matrix 
\begin{equation}
\begin{bmatrix}
-\lambda_0& \lambda_0&0&\dots&0\\ \\
\mu_1 & -(\lambda_1+\mu_1)&\lambda_1 &\ddots & \vdots\\ \\
0 &\mu_2&-(\lambda_2+\mu_2)&\lambda_2 & \vdots\\ \\
\vdots &\ddots & \ddots&\ddots& \lambda_{N-1} \\ \\
0&\dots&\dots&\mu_N&-\mu_N
\end{bmatrix},
\end{equation}
and, given $0 \le i \le N$, let $Q^{[i]} \in \mathbb{R}^{N \times N}$ be the matrix obtained from $G$ by neglecting the $i$-th row and column, where we index from $0$ to $N$. Denote by $1_N$ the vector in $\mathbb{R}^N$ of all ones, and consider the 
linear system
\begin{equation}
Q^{[i]} \tau^{[i]} = -1_N.
\label{eq:exit}
\end{equation}
The component $\tau^{[i]}_j$ then 
gives the mean time spent by the process
(\ref{eq:bdb})--(\ref{eq:bdd}) 
to reach the state $i$ starting from  state $j$;
that is, 
\[
\tau^{(i)}_j = \mathbb{E}\left[ t : X(t) = i|X(0) = j\right], \quad j=0,\dots, N, \quad j\ne i.
\]

Let $\lfloor \cdot \rfloor$ denote the integer flooring operation.
In the case where the cubic 
(\ref{eq:cubic}) has three real roots,
$0 < p_1^\star < p_2^\star < p_3^\star$, 
and 
for a fixed value $n$, and hence $N$, we 
 let $N_{1}=\lfloor p_1^{\star}N \rfloor$,
 $N_{2}=\lfloor p_2^{\star}N \rfloor$
 and 
 $N_{3}=\lfloor p_3^{\star}N \rfloor$.
 Hence, by Theorem~\ref{thm:steady},
 $N_1$ and $N_3$ are 
 the state values corresponding to the 
 peaks of the bimodal density,
 and $N_2$ corresponds to the trough in between.
 In Figure~\ref{fig:exit_ratio}, on the left, 
 by repeatedly solving appropriate systems of the form (\ref{eq:exit}), we show 
 the mean time taken for the process to 
 move from $N_1$ to $N_2$ (dashed line) and from
 $N_3$ to $N_2$ (solid line), as a function of $n$.
  This characterizes for a specific system size $n$ how long 
it takes on average to switch between the two regimes. 
Note that both axes in Figure~\ref{fig:exit_ratio} are logarithmically 
scaled.
We see that as the system size increases, the time to transition between the two modes increases dramatically.
Moreover, for small $n$, the 
$N_1 \to N_2$ transition takes longer than the
$N_3 \to N_2$ transition, whereas this 
relation changes over as $n$ increases.
In Figure~\ref{fig:exit_ratio}, on the right, we show the ratio of the 
$N_1 \to N_2$ and
$N_3 \to N_2$ mean exit times.
This behavior is consistent with 
the steady state plots in Figure~\ref{fig:steady}, where
for small (respectively, large) $n$ there is more probability mass around
$N_1$ (respectively, $N_3$).

\begin{figure}[h]
  \centering
		\subfigure{\scalebox{0.24}{\includegraphics{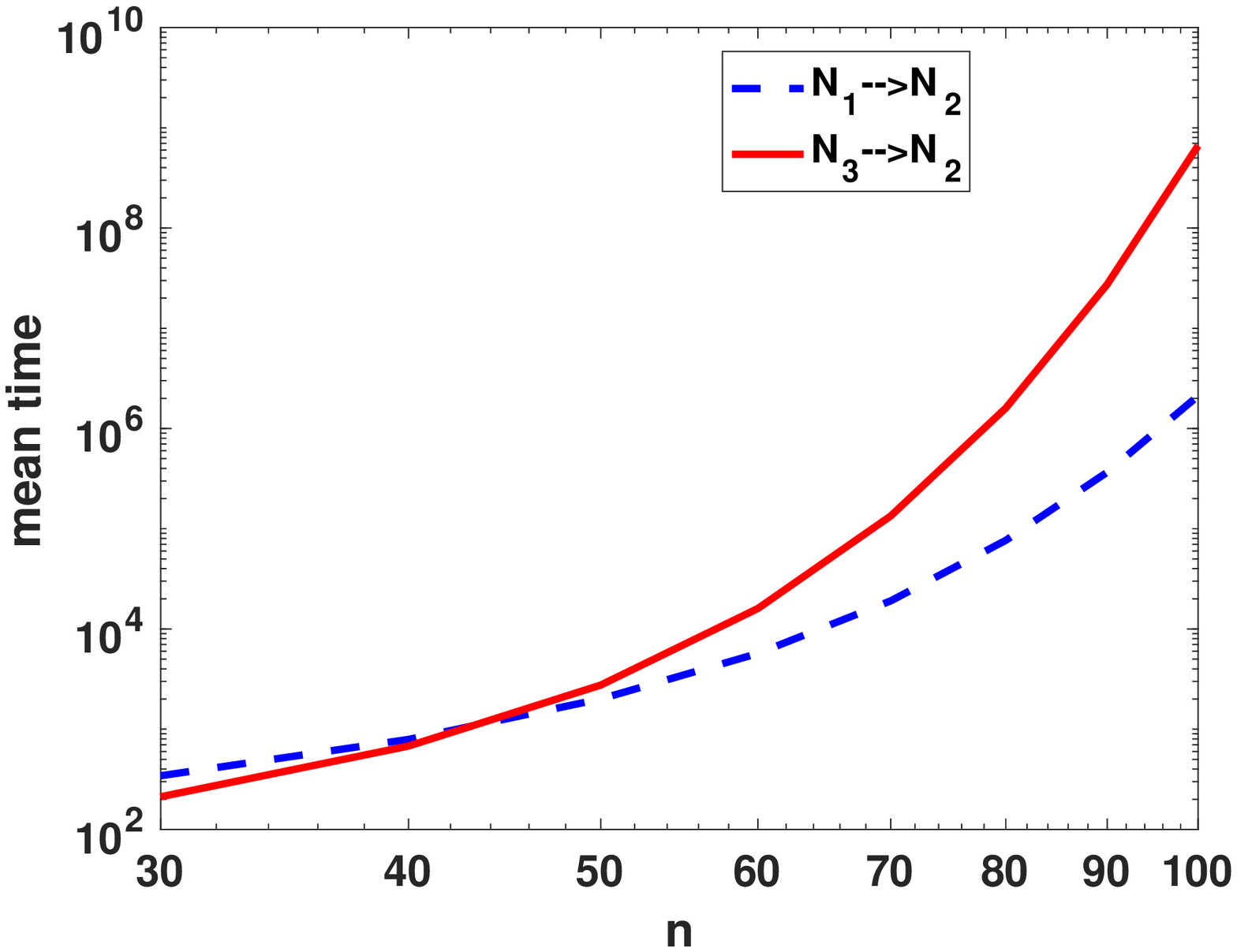}}}\quad
		\subfigure{\scalebox{0.24}{\includegraphics{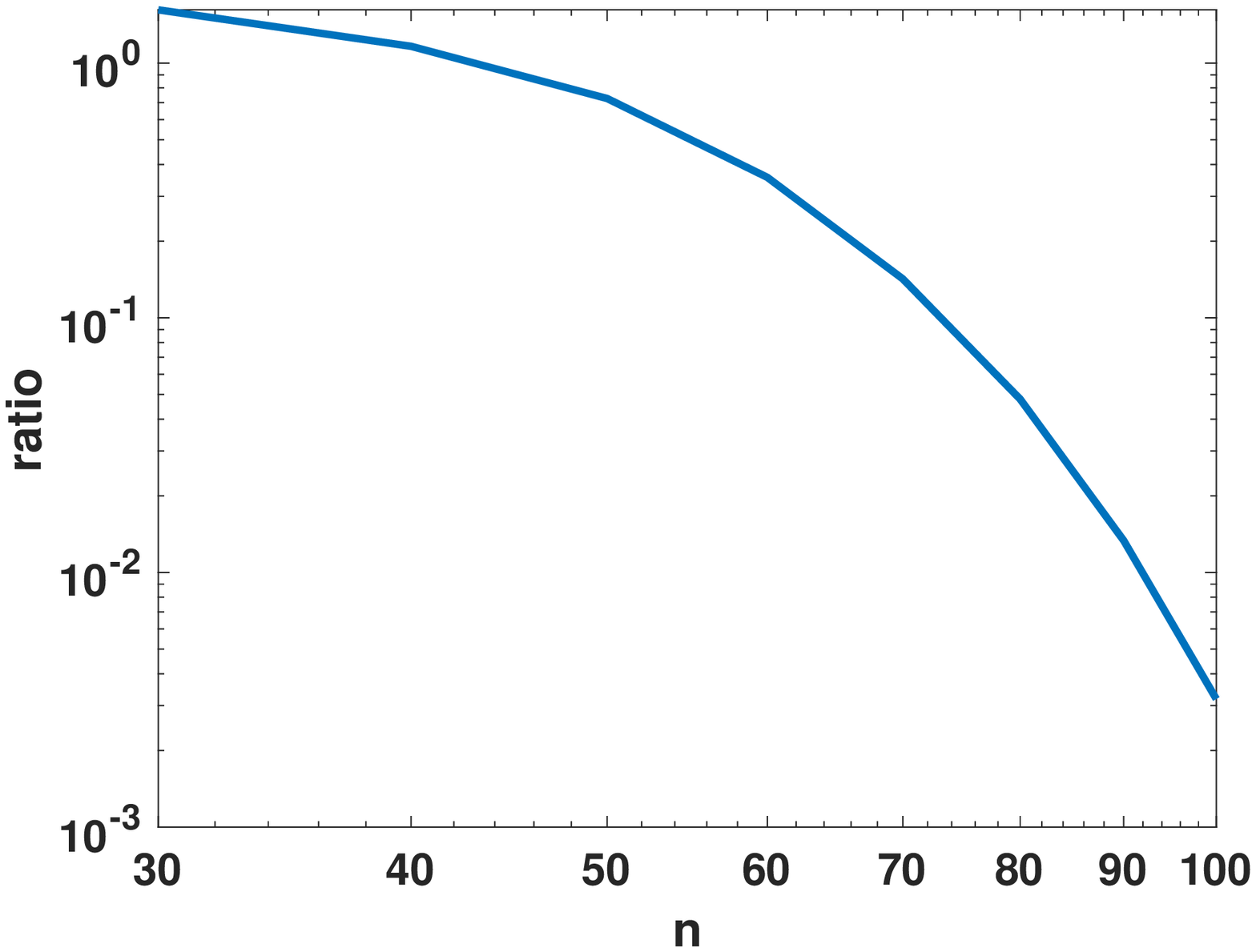}}}
 \caption{Bistable regime: mean time for the macroscale model to reach the central trough of the steady state,
$N_2$, starting from each peak, $N_1$ and $N_3$ (left). Ratio between correspondent times (right).}
  \label{fig:exit_ratio}
\end{figure}



\section{Langevin and Deterministic ODE Models}\label{sec:sde}


 
 The macroscale model 
 (\ref{eq:bdb})--(\ref{eq:bdd}) with 
(\ref{eq:new_lambda_i}) and (\ref{eq:new_mu_i})
may be viewed as 
part of a hierarchy of models
that includes a stochastic differential equation (SDE), or
Langevin process, and a deterministic reaction rate ODE. 
 We will first look at the Langevin version and test whether it retains the metastable 
 behavior observed earlier.
 
 Using standard arguments 
  \cite{G91,G2000} the 
Langevin equation may be written as  
an Ito SDE of the form 
\begin{subequations}
\label{eq:Langevin}
\begin{eqnarray}
dy(t)&=& \mu(y(t))dt+\beta(y(t))d\mathbf{W}(t), \\
\mu(y) &=& c_1\left(1-y\right)-c_2 y+c_3(1-y)y^2, \\
\beta(y) &=& \frac{1}{\sqrt{N}}[\sqrt{c_1(1-y)},-\sqrt{c_2 y},\sqrt{c_3 (1-y)y^2}],
\end{eqnarray}
\end{subequations}
where 
$\mathbf{W}(t)= [W_{1}(t),W_{2}(t),W_{3}(t)]^{T}$
is formed of three independent Brownian motions.
Here the scalar, real-valued, random variable $y(t)$ 
represents the edge density at time $t$. 
Because of the presence of the square roots 
in (\ref{eq:Langevin}c), 
the process becomes ill-defined when
$y(t)$ leaves the interval $[0,1]$.
This circumstance is common in Langevin formulations; intuitively, it arises from the fact that 
the approximations used to transfer between Poisson and  
diffusion processes break down when the molecule count for any species becomes small.
Many approaches have been proposed to overcome this 
issue, see, for example, \cite{constL19,Hybrid16,MAG21} and the references therein.
We will avoid this difficulty by focusing on 
the mean first passage time for the SDE with a reflecting boundary at one end, thereby 
considering only paths that do not leave $[0,1]$.
Letting  
\[
\sigma^{2}(y)=\frac{1}{N}\left(c_1\left(1-y\right)+c_2 y+c_3(1-y)y^2\right),
\]
we may then define the following partial differential operator
\begin{equation} \label{eq:L}
Lu:= \mu (y) \frac{du}{dy}+\frac{1}{2}\sigma^{2}(y)\frac{d^{2}u}{dy^{2}},      
\end{equation}
and the corresponding boundary value problem in the interval $(a,b)$
\begin{equation} \label{eq:mean_exit}
LT=-1,
\end{equation}
subject to one of the two following pairs of boundary conditions: 
\begin{subequations} \label{eq:bound}
\begin{eqnarray}
T'(0) &=&0, \quad T(b)=0, \\
T(a) &=&0, \quad T'(1)=0.
\end{eqnarray}
\end{subequations}
The solution $T(x)$
with boundary conditions (\ref{eq:bound}a)
gives the mean time for paths of \eqref{eq:Langevin} starting from $x$ to 
reach the value $b$ when the SDE has a reflecting boundary condition at $y(t) = 0$ \cite{Gardiner}. 
Similarly, 
with the boundary conditions (\ref{eq:bound}b),
$T(x)$ gives the mean time for the SDE paths to reach $a$ 
starting from $x$
when there is a reflecting boundary condition at $y(t) = 1$.

In the manner of 
Figure~\ref{fig:exit_ratio}, in 
Figure~\ref{fig:exit_time1} we show the 
mean times taken for the 
Langevin process to 
 reach the
 central trough of the steady 
 state distrubution (scaled to represent edge count rather 
 than edge density) from each peak.
 The curves were obtained by numerically solving the relevant boundary value problems.
 Based on separate computations, not reported here, we 
believe that it is very unlikely for an SDE path 
to leave the interval $[0,1]$ over any reasonable timescale, and hence the reflecting boundary condition, which is introduced to avoid the technical issue of 
ill-defined square roots, has little effect on the 
results.

\begin{figure}[htbp]
\centering
\scalebox{0.4}{
\includegraphics{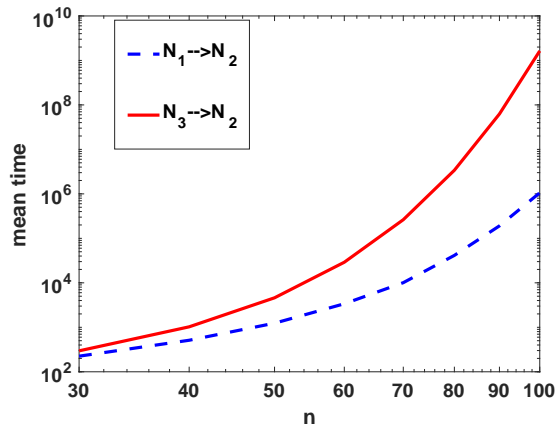}}
\caption{Analogues of the mean exit times in 
Figure~\ref{fig:exit_ratio} for the 
Langevin equation \eqref{eq:Langevin}.
}
\label{fig:exit_time1}
\end{figure}

For further comparison, in Figure~\ref{fig:exit_time_comp} 
we superimpose the mean exit times
from Figure~\ref{fig:exit_ratio} (on the left)
for the macroscale model and
the mean first passage times 
from Figure~\ref{fig:exit_time1} for the 
Langevin process.
Here, we have plotted 
$n^2$ on the horizontal axis against the logarithm of the mean time on the vertical axis.
We see that 
the SDE captures the extreme rate observed for the macroscale model, and 
for 
both models the mean time appears to scale like
$\exp(n^2)$, in line with the reaction rate theory of Kramers
\cite{RevModPhys.62.251,Pavliotisbook}.
It would, of course, be of interest to pursue 
the rigorous analysis 
of 
\eqref{eq:Langevin} with respect to long term behavior 
as the system size increases.

\begin{figure}[htbp]
\centering
\scalebox{0.4}{
\includegraphics{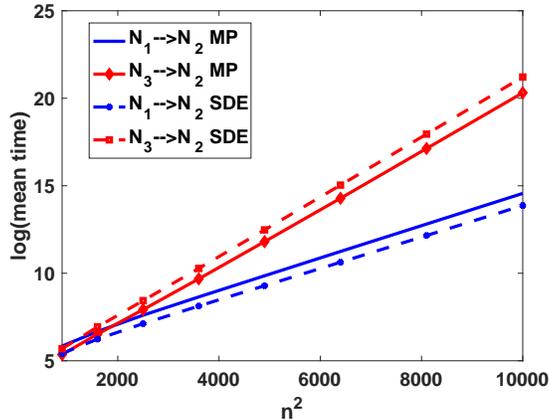}}
\caption{Comparison between mean exit times for the
SDE
\eqref{eq:Langevin} (as in Figure~\ref{fig:exit_time1}) and the macroscale Markov process
 (\ref{eq:bdb})--(\ref{eq:bdd}) with 
(\ref{eq:new_lambda_i}) and (\ref{eq:new_mu_i})
(as in Figure~\ref{fig:exit_ratio}).
}
\label{fig:exit_time_comp}
\end{figure}

\subsection{Deterministic Model}

The reaction rate ODE
arising from 
\eqref{eq:Langevin} takes the form 
$dy/dt = \mu(y(t))$. 
We note that $\mu$ coincides with the 
cubic in (\ref{eq:cubic}), and 
hence the edge densities 
$p^\star$ in part (a) of Theorem~\ref{thm:steady}
and 
$p_1^\star,p_2^\star,p_3^\star$ in part (b) of Theorem~\ref{thm:steady} coincide with fixed points of this
ODE. 
We also note that the deterministic mean field
from \cite{GHP12} is essentially an Euler discretization of this ODE with unit stepsize.
In the upper plots of Figure~\ref{fig:path_comparison}
we see that, in the monostable regime, the deterministic 
mean field approximation captures the long term behavior 
of the 
macro-level edge density.
In the lower plots, corresponding to the  
bistable regime, 
the deterministic approximation is not able to switch, and hence settles on one of the two stable fixed points,
$p_1^\star$ or $p_3^\star$, depending on the initial condition.

\section{Discussion}\label{sec:disc}

Our aim in this work was to 
perform multiscale stochastic modeling and analysis to add insights to the bistability effect 
observed in \cite{GHP12}.
A major objective was to develop
simplified 
stochastic models that 
(a) reproduce the switching and metastability behavior observed for the full microscale model,  (b) make long-term simulation feasible, and (c) offer the potential for  
rigorous analysis.
A key step in this work was to 
set up a suitable scaling
for the triadic closure
reaction rate as a function of system size, i.e., the total number of possible edges, $N$.
Returning to the original microscale model 
 (\ref{eq:OE})--(\ref{eq:EEO}), we note that $\wc3$ is the rate constant for 
 each individual triadic closure reaction.
 So the chance of 
 reaction
  (\ref{eq:EEO}) happening is 
  $\wc3$ if
  each of   
$ O_{ij} $,   $ E_{jk} $ and $ E_{ik} $ exist,
and zero otherwise.
Based on the way our macroscale 
approximations arose, 
it was then natural for $\wc3$ to 
have the scaling 
$\wc3 \propto 1/\sqrt{N}$
in (\ref{eq:c3scale}).  
Here, \emph{the chance of any individual triadic closure event taking place decreases as the system size grows}.
This scaling is necessary if we wish to be in a regime  where 
spontaneous birth, spontaneous death and 
triadic closure all coexist for large system size.
In particular, keeping $\wc3$ constant as $N$ grows would lead to a system where triadic closure
dominates, in general.

There are several directions in which this work could be pursued.
For example, 
we note that although 
online 
social networking 
has the potential 
to increase 
the number of friendships an individual may form,
there is evidence of an upper limit of
 around 150 in practice 
   \cite{B06,D16}.
   This ``Dunbar  number''
effect can be explained 
  in part by cognitive constraints and in part by the time costs of maintaining relationships 
\cite{Dunbar,D16}.
In our model, in a bistable regime  
the low and high 
connectivity levels  
($p_1^\star N$ and $p_2^\star N$ in Figure~\ref{fig:steady})
lead to nodal degrees that become arbitrarily large
in terms of
$N$.
Hence, in the case of very large networks it would be reasonable to 
investigate models that 
incorporate some sort of saturation or carrying capacity effect in order to limit the maximum number of friendships a person may maintain. 

Furthermore, it is of course possible to move from triadic closure to more general mechanisms that encourage cliques
\cite{clique2015} or other motifs to 
form in a network.
Here new challenges include
the incorporation of appropriate higher order 
matrix-level nonlinearities in expressions such as 
(\ref{eq:atriadic}), and the justification of
mean field approximations across longer-range interactions.
Finally, it would be of interest to develop calibration methods along the lines of \cite{MH20} in order to  fit parameters to real data and  search for bistability effects.


\begin{thebibliography}{10}

\bibitem{constL19}
{\sc D.~F. Anderson, D.~J. Higham, S.~C. Leite, and R.~J. Williams}, {\em On
  constrained {L}angevin equations and (bio)chemical reaction networks},
  Multiscale Modeling and Simulation (SIAM), 17 (2019), pp.~1--30.

\bibitem{BRSJK18}
{\sc A.~R. Benson, R.~Abebe, M.~T. Schaub, A.~Jadbabaie, and J.~Kleinberg},
  {\em Simplicial closure and higher-order link prediction}, Proceedings of the
  National Academy of Sciences, 115 (2018), pp.~E11221--E11230.

\bibitem{benson2016higher}
{\sc A.~R. Benson, D.~F. Gleich, and J.~Leskovec}, {\em Higher-order
  organization of complex networks}, Science, 353 (2016), pp.~163--166.

\bibitem{B06}
{\sc J.~Boase, J.~B. Horrigan, B.~Wellman, and L.~Rainie}, {\em The strength of
  internet ties}, in Pew Internet and American Life Project, 2006.

\bibitem{Carr1995}
{\sc J.~Carr, D.~B. Duncan, and C.~Walshaw}, {\em Numerical approximation of a
  metastable system}, IMA Journal of Numerical Analysis, 15 (1995),
  pp.~505--521.

\bibitem{Dunbar}
{\sc R.~I.~M. Dunbar}, {\em Neocortex size as a constraint on group size in
  primates}, Journal of Human Evolution, 22 (1992), pp.~469--493.

\bibitem{D16}
\leavevmode\vrule height 2pt depth -1.6pt width 23pt, {\em Do online social
  media cut through the constraints that limit the size of offline social
  networks?}, R. Soc. Open Sci., 3 (2016), p.~3150292.

\bibitem{Hybrid16}
{\sc A.~Duncan, R.~Erban, and K.~Zygalakis}, {\em Hybrid framework for the
  simulation of stochastic chemical kinetics}, Journal of Computational
  Physics, 326 (2016), pp.~398--419.

\bibitem{EK2010}
{\sc D.~Easley and J.~Kleinberg}, {\em Networks, Crowds, and Markets: Reasoning
  About a Highly Connected World}, Cambridge University Press, 2010.

\bibitem{EA15}
{\sc E.~Estrada and F.~Arrigo}, {\em Predicting triadic closure in networks
  using communicability distance functions}, SIAM Journal on Applied
  Mathematics, 75 (2015), pp.~1725--1744.

\bibitem{cliques20}
{\sc J.~Fox, T.~Roughgarden, C.~Seshadhri, F.~Wei, and N.~Wein}, {\em Finding
  cliques in social networks: {A} new distribution-free model}, {SIAM} J.
  Comput., 49 (2020), pp.~448--464.

\bibitem{Gardiner}
{\sc C.~W. Gardiner}, {\em Handbook of Stochastic Methods, for Physics,
  Chemistry and the Natural Sciences}, Springer-Verlag, third~ed., 2004.

\bibitem{G76}
{\sc D.~T. Gillespie}, {\em A general method for numerically simulating the
  stochastic time evolution of coupled chemical reactions}, J. Comp. Phys., 22
  (1976), pp.~403---434.

\bibitem{G77}
\leavevmode\vrule height 2pt depth -1.6pt width 23pt, {\em Exact stochastic
  simulation of coupled chemical reactions}, J. Phys. Chem., 81 (1977),
  pp.~2340--2361.

\bibitem{G91}
{\sc D.~T. Gillespie}, {\em {M}arkov {P}rocesses: {A}n {I}ntroduction for
  {P}hysical {S}cientists}, Academic Press, San Diego, 1991.

\bibitem{G2000}
{\sc D.~T. Gillespie}, {\em The chemical {L}angevin equation}, J. Chem. Phys.,
  113 (2000), pp.~297--306.

\bibitem{G2001tau}
\leavevmode\vrule height 2pt depth -1.6pt width 23pt, {\em Approximate
  accelerated stochastic simulation of chemically reacting systems}, Journal of
  Chemical Physics, 115 (2001), pp.~1716--1733.

\bibitem{Gran1973}
{\sc M.~S. Granovetter}, {\em The strength of weak ties}, The American Journal
  of Sociology, 78 (1973), pp.~1360--1380.

\bibitem{GH12}
{\sc P.~Grindrod and D.~J. Higham}, {\em Models for evolving networks: with
  applications in telecommunication and online activities}, The IMA Journal of
  Management Mathematics, 23 (2012), pp.~1--16.

\bibitem{GHP12}
{\sc P.~Grindrod, D.~J. Higham, and M.~C. Parsons}, {\em Bistability through
  triadic closure}, Internet Math., 8 (2012), pp.~402--423.

\bibitem{RevModPhys.62.251}
{\sc P.~H{\"a}nggi, P.~Talkner, and M.~Borkovec}, {\em Reaction-rate theory:
  fifty years after {K}ramers}, Rev. Mod. Phys., 62 (1990), pp.~251--341.

\bibitem{KD17}
{\sc J.~Kim and J.~Diesner}, {\em Over-time measurement of triadic closure in
  coauthorship networks}, Social Network Analysis and Mining, 7 (2017).

\bibitem{K14}
{\sc T.~G. Kolda, A.~Pinar, T.~Plantenga, C.~Seshadhri, and C.~Task}, {\em
  Counting triangles in massive graphs with {M}ap{R}educe}, SIAM Journal on
  Scientific Computing, 36 (2014), pp.~S44--S77.

\bibitem{L13}
{\sc T.~Lou, J.~Tang, J.~Hopcroft, Z.~Fang, and X.~Ding}, {\em Learning to
  predict reciprocity and triadic closure in social networks}, ACM Trans.
  Knowl. Discov. Data, 7 (2013).

\bibitem{MH20}
{\sc A.~V. Mantzaris and D.~J. Higham}, {\em Infering and calibrating triadic
  closure in a dynamic network}, in Temporal Networks, P.~Holme and
  J.~Saramaki, eds., Springer, 2013, pp.~265--282.

\bibitem{MAG21}
{\sc M.~Merritt, A.~Alexanderian, and P.~A. Gremaud}, {\em Multiscale global
  sensitivity analysis for stochastic chemical systems}, Multiscale Modeling
  and Simulation (SIAM),  (2021), pp.~440--459.

\bibitem{clique2015}
{\sc K.~Pattiselanno, J.~K. Dijkstra, C.~Steglich, W.~Vollebergh, and
  R.~Veenstra}, {\em Structure matters: The role of clique hierarchy in the
  relationship between adolescent social status and aggression and
  prosociality}, Journal of Youth and Adolescence, 44 (2015), pp.~2257--2274.

\bibitem{Pavliotisbook}
{\sc G.~A. Pavliotis}, {\em Stochastic Processes and Applications: Diffusion
  Processes, the Fokker-Planck and Langevin Equations}, Springer, Berlin, 2014.

\bibitem{Simm08}
{\sc G.~Simmel}, {\em Soziologie: {U}ntersuchungen {\"u}ber die {F}ormen der
  {V}ergesellschaftung}, Suhrkamp Verlag, Frankfurt, 1992 (Original work
  published in 1908).

\bibitem{TaylKarl98}
{\sc H.~M. Taylor and S.~Karlin}, {\em An Introduction To Stochastic Modeling},
  Academic Press, San Diego, third edition~ed., 1998.

\bibitem{torres2020why}
{\sc L.~Torres, A.~{Sizemore Blevins}, D.~S. Bassett, and T.~Eliassi-Rad}, {\em
  The why, how, and when of representations for complex systems}, SIAM Review,
  63 (2021), pp.~435--485.

\bibitem{YBU20}
{\sc H.~Yin, A.~R. Benson, and J.~Ugander}, {\em Measuring directed triadic
  closure with closure coefficients}, Network Science,  (2020), pp.~1--23.

\end{thebibliography}
\end{document}